\theoremstyle{plain}
\newtheorem{theorem}{Theorem}
\newtheorem{lemma}[theorem]{Lemma}
\newtheorem*{conjecture*}{Conjecture}
\theoremstyle{definition}
\newtheorem{definition}[theorem]{Definition}
\theoremstyle{remark}
\newtheorem*{remark*}{Remark}
\newcommand{\namedref}[2]{\hyperref[#2]{#1~\ref*{#2}}}
\newcommand{\figureref}[1]{\namedref{Figure}{#1}}
\newcommand{\parent}{\mathsf {Parent}}
\newcommand{\lleft}{\mathsf {Left}}
\newcommand{\lright}{\mathsf {Right}}
\newcommand{\lchild}{\mathsf {LChild}}
\newcommand{\rchild}{\mathsf {RChild}}
\newcommand{\port}{\mathsf {Port}}
\newcommand{\noport}{\mathsf {NoPort}}
\newcommand{\portedge}{\mathsf {PortEdge}}
\newcommand{\gadedge}{\mathsf {GadEdge}}
\newcommand{\lindex}{\mathsf {Index}}
\newcommand{\lcenter}{\mathsf {Center}}
\newcommand{\lerror}{\mathsf {Error}}
\newcommand{\lup}{\mathsf {Up}}
\newcommand{\ldown}{\mathsf {Down}}
\newcommand{\llout}{\mathsf {out}}
\newcommand{\lin}{\mathsf {in}}
\newcommand{\NN}{\mathbb {N}}
\newcommand{\AAA}{\mathcal {A}}
\newcommand{\gadget}{\mathcal {G}}
\newcommand{\gadrepr}{{\hat{G}}}
\newcommand{\gadcheck}{\mathcal {V}}
\newcommand{\lgadok}{\mathsf {GadOk}}
\newcommand{\lporterr}{\mathsf {PortErr}}
\newcommand{\lnoporterr}{\mathsf {NoPortErr}}
\newcommand{\lok}{\mathsf {Ok}}
\newcommand{\lerr}{\mathsf {Err}}
\newcommand{\inn}{{\operatorname{in}}}
\newcommand{\out}{{\operatorname{out}}}
\newcommand{\llist}{{\operatorname{list}}}
\newcommand{\tdet}{T_{\operatorname{det}}}
\newcommand{\trand}{T_{\operatorname{rand}}}
\newcommand{\netdec}{\mathsf{ND}}
\newcommand{\lcl}{{\upshape\sffamily LCL}}
\newcommand{\nelcl}{{\upshape\sffamily ne-LCL}}
\newcommand{\local}{{\upshape\sffamily LOCAL}}
\DeclareMathOperator{\poly}{poly}
\newenvironment{myabstract}
{\list{}{\listparindent 1.5em%
        \itemindent    \listparindent
        \leftmargin    1cm
        \rightmargin   1cm
        \parsep        0pt}%
    \item\relax}
{\endlist}
\newenvironment{mycover}
{\list{}{\listparindent 0pt
        \itemindent    \listparindent
        \leftmargin    1cm
        \rightmargin   1cm
        \parsep        0pt}%
    \raggedright
    \item\relax}
{\endlist}
\newcommand{\myemail}[1]{\,$\cdot$\, {\small #1}}
\newcommand{\myaff}[1]{\,$\cdot$\, {\small #1}\par\smallskip}
\begin{document}

\begin{mycover}
    {\huge\bfseries How much does randomness help with locally checkable problems? \par}
    \bigskip
    \bigskip

    \textbf{Alkida Balliu}
    \myemail{alkida.balliu@aalto.fi}
    \myaff{Aalto University}

    \textbf{Sebastian Brandt}
    \myemail{brandts@ethz.ch}
    \myaff{ETH Zurich}

    \textbf{Dennis Olivetti}
    \myemail{dennis.olivetti@aalto.fi}
    \myaff{Aalto University}

    \textbf{Jukka Suomela}
    \myemail{jukka.suomela@aalto.fi}
    \myaff{Aalto University}
\end{mycover}
\bigskip

\begin{myabstract}
\noindent\textbf{Abstract.}
Locally checkable labeling problems (\lcl{}s) are distributed graph problems in which a solution is globally feasible if it is locally feasible in all constant-radius neighborhoods. Vertex colorings, maximal independent sets, and maximal matchings are examples of \lcl{}s.

On the one hand, it is known that some \lcl{}s benefit \emph{exponentially} from randomness---for example, any deterministic distributed algorithm that finds a \emph{sinkless orientation} requires $\Theta(\log n)$ rounds in the \local{} model, while the randomized complexity of the problem is $\Theta(\log \log n)$ rounds. On the other hand, there are also many \lcl{}s in which randomness is useless.

Previously, it was not known if there are any \lcl{}s that benefit from randomness, but only \emph{subexponentially}. We show that such problems exist: for example, there is an \lcl{} with deterministic complexity $\Theta(\log^2 n)$ rounds and randomized complexity $\Theta(\log n \log \log n)$ rounds.
\end{myabstract}

\thispagestyle{empty}
\setcounter{page}{0}
\newpage

\section{Introduction}

\paragraph{Locality of locally checkable problems.}

One of the big themes in the theory of distributed graph algorithms is \emph{locality}: given a graph problem, how \emph{far} does an individual node need to see in order to be able to produce its own part of the solution? This idea is formalized as the time complexity in the \local{} model \cite{Linial1992,Peleg2000} of distributed computing.

\begin{figure}
\centering
\includegraphics[page=1,width=\textwidth]{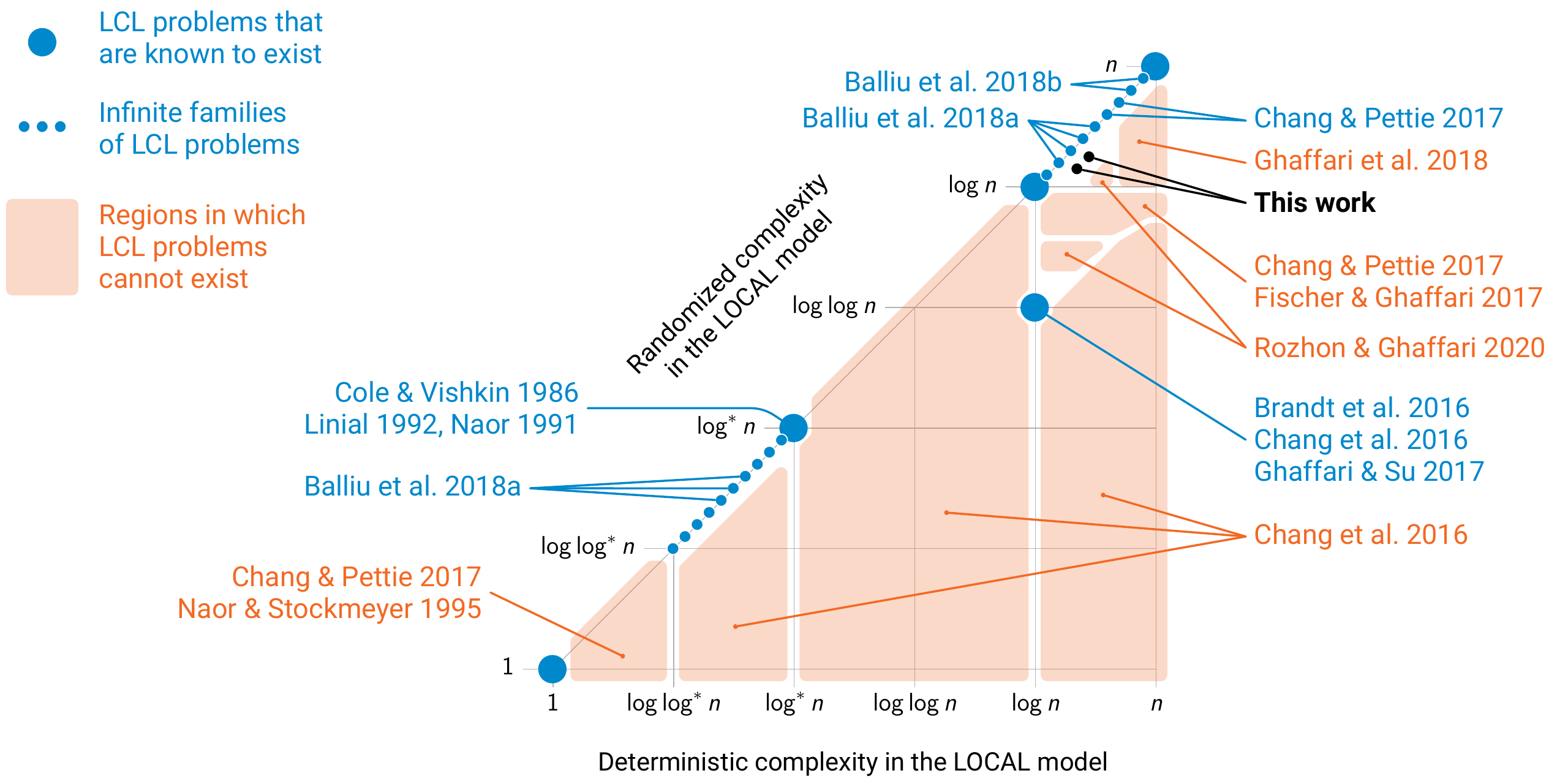}
\caption{The landscape of \lcl{} problems in the \local{} model: possible complexities (blue dots) and gaps (orange shading), based on the papers \cite{cole86deterministic,Linial1992,Naor1991,Balliu2018stoc,Balliu2018disc,Chang2019,Ghaffari2018,fischer17sublogarithmic,Rozhon2019,Brandt2016,chang16exponential,ghaffari17distributed,Naor1995}. The present work gives the last major piece of the puzzle: there are indeed problems in which randomness helps, but only polynomially.}\label{fig:landscape}
\end{figure}

While we are still very far from understanding the locality of all possible graph problems, there is one highly relevant family of graph problems  that is now close to being completely characterized: \emph{locally checkable labeling problems}, or in brief \lcl{}s. In essence, \lcl{}s are graph problems in which feasible solutions are easy to verify in a distributed setting---if a solution looks good in all local neighborhoods, it is also good globally. This family of problems was introduced in the seminal paper by \citet{Naor1995} in the 1990s, and while the groundwork for understanding the locality of \lcl{}s was done already in the 1980s--1990s \cite{Goldberg1988,cole86deterministic,panconesi95delta,Linial1992,Naor1991}, most of the progress is from the past four years \cite{Balliu2018stoc,Balliu2018disc,Brandt2016,chang16exponential,Chang2019,fischer17sublogarithmic,ghaffari17distributed,Ghaffari2018,Ghaffari2018a,chang18complexity,Brandt2017}.

There are many relevant graph classes to study, but for our purposes the most interesting case is \emph{general bounded-degree graphs}. We only assume that there is some constant upper bound $\Delta$ on the maximum degree of the graph, and other than that there is no promise about the structure of the input graph. If there are $n$ nodes, the nodes will have unique identifiers from $\{1,2,\dotsc,\poly(n)\}$, and initially each node knows $n$, $\Delta$, its own identifiers, and its own degree---everything else it has to learn through communication.

For bounded-degree graphs, the state of the art is summarized in \figureref{fig:landscape}. The figure represents the landscape of all possible distributed time complexities of \lcl{} problems, both for deterministic and randomized algorithms. There are infinite families of problems with distinct time complexities, but there are also large \emph{gaps}: for example, \citet{chang16exponential} showed that there is no \lcl{} with a time complexity in the range $\omega(\log^* n)$ and $o(\log n)$. For deterministic algorithms, the work of characterizing possible time complexities of \lcl{} problems is near-complete.

\paragraph{Role of randomness.}

What we aim at understanding is \emph{how much randomness helps} with \lcl{}s. As shown in \figureref{fig:landscape}, there are some problems in which randomness helps \emph{exponentially}. The most prominent example is \emph{sinkless orientation}: its deterministic complexity is $\Theta(\log n)$, while the randomized complexity is $\Theta(\log\log n)$ \cite{Brandt2016,chang16exponential,ghaffari17distributed}.

On the one hand, it is known that there is at most an exponential gap between deterministic and randomized complexities \cite{chang16exponential}. On the other hand, there are also lower bounds that exclude many possible combinations of deterministic and randomized time complexities. As illustrated in \figureref{fig:landscape}, the work by \citet{Chang2019} and \citet{fischer17sublogarithmic} implies that there is no \lcl{} with deterministic complexity $\Theta(\log n)$ and randomized complexity e.g.\ $\Theta(\sqrt{\log n})$. If a problem can be solved in deterministic logarithmic time, then either randomness helps a lot or not at all.

Sinkless orientation and closely related problems such as \emph{$\Delta$-coloring} and \emph{algorithmic Lov\'asz local lemma} are currently the only \lcl{}s for which randomness is known to help. Indeed, all known results previous to our work are compatible with the following conjecture:

\begin{oframed}
\begin{conjecture*}
If the deterministic complexity of an \lcl{} is $\Theta(\log n)$, then its randomized complexity is either $\Theta(\log n)$ or $\Theta(\log \log n)$. Otherwise the randomized complexity is asymptotically equal to the deterministic complexity.

In particular, randomness helps exponentially or not at all.
\end{conjecture*}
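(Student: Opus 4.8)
\medskip
\noindent\textbf{Refuting the conjecture.} This conjecture turns out to be false; to disprove it we must exhibit an \lcl{} $\Pi$ on bounded-degree graphs whose deterministic complexity is $\Theta(\log^2 n)$ but whose randomized complexity is only $\Theta(\log n\log\log n)$. The plan is to take the one problem already known to separate determinism from randomness, \emph{sinkless orientation} (deterministic $\Theta(\log m)$, randomized $\Theta(\log\log m)$ on an $m$-node high-girth instance), and stack $\Theta(\log n)$ copies of it so that the copies must be resolved \emph{sequentially}. Multiplying ``$\Theta(\log n)$ layers'' by ``cost per layer'' then gives $\Theta(\log^2 n)$ deterministically and $\Theta(\log n\log\log n)$ with randomness.

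\emph{Construction.} I would build a graph family whose $n$ nodes split into $k=\Theta(\log n)$ layers $L_1,\dots,L_k$, each $L_i$ a high-girth $\Delta$-regular (or biregular bipartite) graph on $\Theta(n/\log n)$ nodes---large enough that a single layer, viewed in isolation, is a sinkless-orientation instance of polynomial size, hence of complexity $\Theta(\log n)$ deterministically and $\Theta(\log\log n)$ randomized. Consecutive layers are linked by gadgets that make the orientation demanded in $L_{i+1}$ depend on the one already chosen in $L_i$ (e.g.\ the source/sink requirements in $L_{i+1}$ are ``switched on'' according to the outcome in $L_i$), together with consistency labels that force a globally feasible solution to fill the layers in the order $L_1,L_2,\dots,L_k$. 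Crucially, the number of layers must be read off from the graph itself---say from the length of a ``spine'' path---rather than hard-coded, so that $\Pi$ is a bona fide \lcl{}: bounded degree, constant checking radius, one definition valid for all $n$.

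\emph{Upper bounds.} The obvious algorithm solves the layers one after another: once $L_1,\dots,L_i$ are done, $L_{i+1}$ is an ordinary sinkless-orientation instance, solved in $O(\log n)$ deterministic or $O(\log\log n)$ randomized rounds by the known algorithms. Over $k=\Theta(\log n)$ layers this yields the claimed $O(\log^2 n)$ and $O(\log n\log\log n)$ bounds; in the randomized case the per-layer failure probability is $1/\poly(n)$, and a union bound over $\Theta(\log n)$ layers keeps the global failure probability at $1/\poly(n)$.

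\emph{Lower bounds, and the main obstacle.} The real work is showing that no algorithm, deterministic or randomized, can avoid unwinding the tower---that labeling correctly near $L_i$ genuinely requires having resolved $L_1,\dots,L_{i-1}$. The intended route is a simulation/indistinguishability argument turning a hypothetical too-fast algorithm for $\Pi$ into a too-fast algorithm for a single polynomial-size sinkless-orientation instance, contradicting the $\Omega(\log n)$ deterministic and $\Omega(\log\log n)$ randomized lower bounds of \cite{Brandt2016,chang16exponential}. The quantitatively delicate point is the randomized bound $\Omega(\log n\log\log n)$: a single sinkless-orientation round-elimination argument only survives $\Theta(\log\log n)$ steps before the success probability degrades, so one needs the per-layer hardness contributions to \emph{add up} rather than overlap---i.e.\ the portions of the graph an algorithm must inspect to handle distinct layers should be genuinely disjoint, and no shortcut may leak hardness from one layer to the next. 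Establishing this cross-layer independence, and pushing the round-elimination bookkeeping through $\Theta(\log n)$ nested levels without loss, is where I expect the difficulty to concentrate.
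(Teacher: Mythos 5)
Your target and your base problem are right---you correctly identify that the conjecture is to be refuted by exhibiting an \lcl{} with deterministic complexity $\Theta(\log^2 n)$ and randomized complexity $\Theta(\log n\log\log n)$, built from sinkless orientation. But the construction you sketch (stacking $\Theta(\log n)$ layers of sinkless orientation that must be resolved sequentially) is not the paper's approach, and it has gaps that your own proposal flags but does not resolve.

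The paper's technique is \emph{padding}: rather than stacking $\Theta(\log n)$ instances, it takes a single hard instance $G$ for sinkless orientation and replaces each node of $G$ by a tree-like gadget of depth $\Theta(\log n)$. The point is that the gadget stretches all distances by a $\Theta(\log n)$ factor, so simulating any $T$-round algorithm for the original problem costs $\Theta(T\log n)$ rounds on the padded graph. This makes the lower bound a clean simulation argument (Lemma~\ref{lem:newpi_lb}): any $o(T(\sqrt n)\cdot\log\sqrt n)$-round algorithm for $\Pi'$ can be compressed into an $o(T)$-round algorithm for $\Pi$ on a graph of $\sqrt n$ nodes, contradicting the known lower bound for sinkless orientation. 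There is no ``per-layer hardness must add up'' or nested round-elimination bookkeeping; the slowdown is multiplicative by construction, and randomness carries through because the reduction preserves failure probability.

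Your layered scheme, by contrast, has two genuine problems. First, the lower bound: you observe yourself that you would need the hardness of $\Theta(\log n)$ sinkless-orientation instances to accumulate without overlap, and you have no mechanism for this. This is not a bookkeeping detail---there is no known way to chain round elimination across $\Theta(\log n)$ dependent sinkless-orientation instances, and it is far from clear that your ``switched on'' dependency produces anything of the sort. Second, and more basically, it is unclear your construction even forces the \emph{upper bound} behavior you want: if the cross-layer coupling is local (as it must be for $\Pi$ to be an \lcl{} with constant checking radius), then a node in $L_{i+1}$ only needs its constant-radius view of $L_i$'s output, and nothing stops an algorithm from working on all layers simultaneously in $O(\log n)$ rounds. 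The ordering your ``consistency labels'' impose is an ordering on the \emph{solution}, not on the \emph{algorithm's timeline}; \lcl{} constraints cannot say ``layer $i$ must be computed before layer $i+1$.''

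Finally, the part you wave at with ``constant checking radius, one definition valid for all $n$'' is in fact where the bulk of the paper's technical effort lives. To make $\Pi'$ a genuine \lcl{} on \emph{all} bounded-degree graphs (not just nicely padded ones), the paper designs the gadget family so that an invalid gadget admits a locally checkable proof of error, and defines $\Pi'$ so that nodes in invalid gadgets may certify the error instead of solving the problem---while nodes in valid gadgets cannot cheat. This error-certification machinery (Sections~\ref{sec:thegadget} and the constraints in Section~\ref{sec:padded-lcl}) is what turns the padding idea into a valid \lcl{}, and it has no counterpart in your sketch.
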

\end{oframed}

We show that the conjecture is false. We show that there are \lcl{} problems that benefit from randomness, but only \emph{polynomially}. We show how to construct, e.g., an \lcl{} with deterministic complexity $\Theta(\log^2 n)$ rounds and randomized complexity $\Theta(\log n \log \log n)$ rounds. 

The role of randomness in distributed computing is a key research question, and it has been extensively studied. In fact, in their book on graph coloring, \citet{Barenboim2013} write ``Perhaps the most fundamental open problem in this field is to understand the power and limitations of randomization.'' We make a step forward in the understanding of this fundamental question.

\paragraph{Technique: padding.}

\begin{figure}[t]
    \centering
    \includegraphics[scale=0.5]{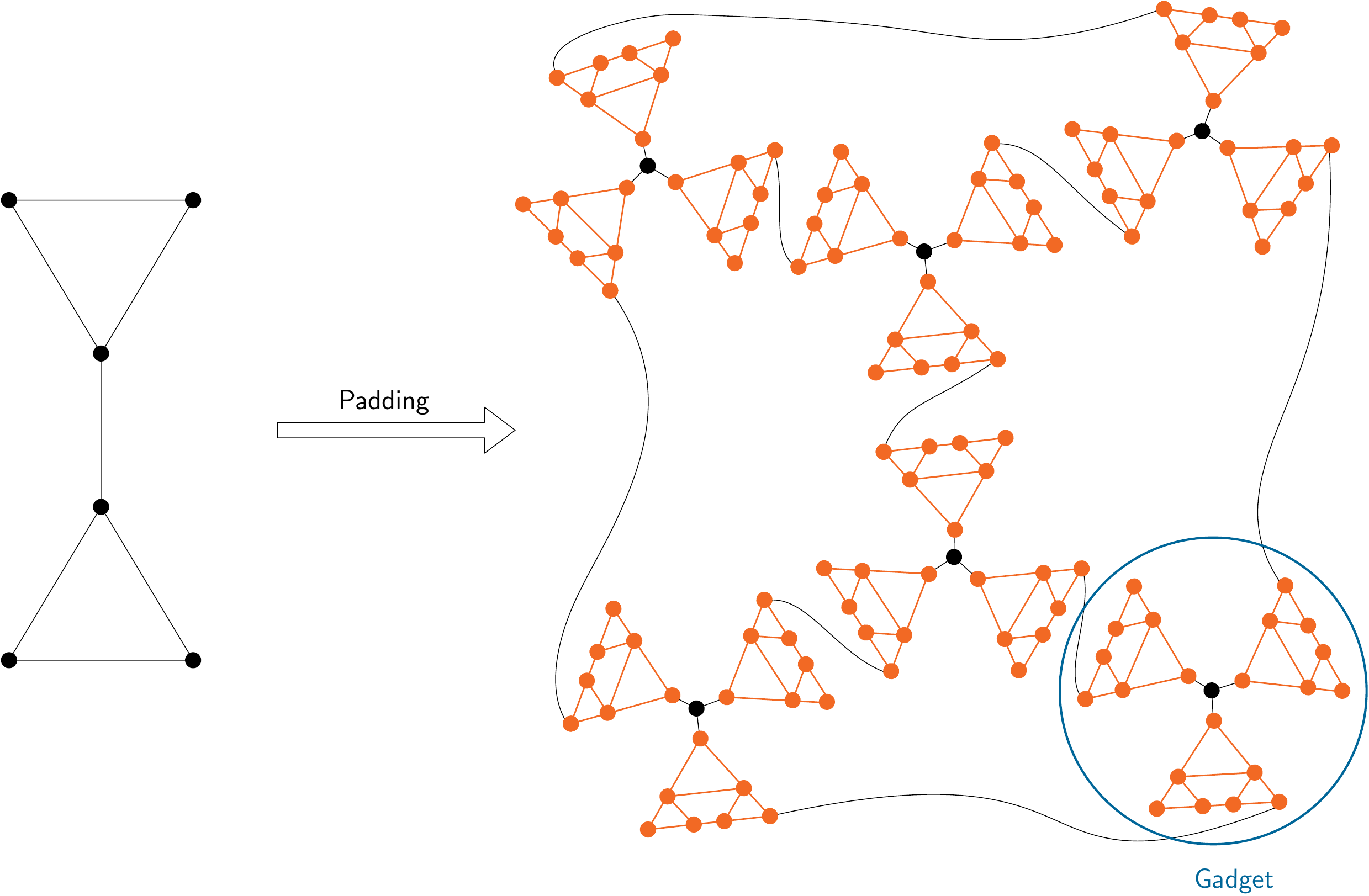}
    \caption{Padding: replacing each node of $G$ with a tree-like gadget from family $\gadget$ to obtain a new graph $G'$.}\label{fig:padding}
\end{figure}

The main technical idea is to introduce the concept of \emph{padding} in the construction of \lcl{} problems---the basic idea is inspired by the padding technique in the classical computational complexity theory \cite[Sect.~2.6]{Arora2009}.

We start with an \lcl{} problem $\Pi$ and a suitable family of \emph{gadgets} $\gadget$. Then we use the gadgets to construct a new graph problem $\Pi'$ such that both deterministic and randomized complexity of $\Pi'$ is higher than those of $\Pi$. More concretely, let $\Pi$ be the problem of finding a sinkless orientation, with randomized complexity $\Theta(\log \log n)$ and deterministic complexity $\Theta(\log n)$, and let $\gadget$ be a suitable family of tree-like graphs. By applying $\gadget$ to $\Pi$, we obtain $\Pi'$ in which both randomized and deterministic time complexity have increased by a factor of $\Theta(\log n)$; hence the randomized complexity of $\Pi'$ is $\Theta(\log n \log \log n)$ and the deterministic complexity is $\Theta(\log^2 n)$. By applying $\gadget$ to $\Pi'$ recursively, we can then further obtain randomized complexity $\Theta(\log^i n \log\log n)$ and deterministic complexity $\Theta(\log^{i+1}n)$ for any constant $i$.

\figureref{fig:padding} shows what we would ideally like to do: given a hard instance $G$ for $\Pi$, we replace each node with a suitable gadget to obtain a hard instance $G'$ for $\Pi'$. The intuition here is that padding increases distances, so if all gadgets happened to be trees of depth $x$, then solving $\Pi'$ on $G'$ is exactly $x$ times as hard as solving $\Pi$ on $G$.

This would be easy to implement if we had a \emph{promise} that the input is of a suitable form, but with a promise one can trivially construct \lcl{}s with virtually any complexity. The key challenge is implementing the idea so that $\Pi'$ is an \lcl{} in the strict sense and we can control its distributed time complexity also in the family of all bounded-degree graphs. Some challenges we need to address include:
\begin{enumerate}[noitemsep]
    \item What if we have an input graph $G'$ that is not of the right form, i.e., it does not consist of valid gadgets connected to each other?
    \item What if we have an input graph $G'$ in which the gadgets have different depths?
\end{enumerate}

The first challenge we overcome by making the gadgets locally checkable. In essence, a node will be able to see within distance $O(\log n)$ if it is part of an invalid gadget, and it is also able to construct a \emph{locally checkable proof of error}. \lcl{} $\Pi'$ is defined so that we have to either solve the original problem $\Pi$ or produce locally checkable proofs of errors. This ensures that:
\begin{itemize}[noitemsep]
    \item An algorithm solving $\Pi'$ cannot cheat and claim that the input is invalid if this is not the case.
    \item The adversary who constructs input $G'$ never benefits from a construction that contains invalid gadgets, as they will in essence result in ``don't care'' nodes that only make solving $\Pi'$ easier.
\end{itemize}
See e.g.\ \cite{Goos2016,korman10proof} for more details on the concept of locally checkable proofs; in our case it will be essential that errors have a locally checkable proof with constantly many bits per node so that we can interpret it as an \lcl{}.

The second challenge we overcome by choosing the original problem $\Pi$ and the gadget family $\gadget$ so that the worst case input that the adversary can construct is essentially of the following form:
\begin{itemize}[noitemsep]
    \item Start with an $n$-node graph $G$ that is a worst-case input for $\Pi$.
    \item Replace each node with an $n$-sized gadget, which has depth $\Theta(\log n)$.
\end{itemize}
This way in the worst case the adversary can construct a graph $G'$ with $N = n^2$ nodes, and if solving $\Pi$ on $G$ took $\Theta(\log \log n)$ rounds for randomized algorithms and $\Theta(\log n)$ rounds for deterministic algorithms, then solving $\Pi'$ on $G'$ will take $\Theta(\log n \log \log n) = \Theta(\log N \log \log N)$ rounds for randomized algorithms and $\Theta(\log^2 n) = \Theta(\log^2 N)$ rounds for deterministic algorithms. We can show that a different balance between the size of $G$ and the depth of each gadget will not result in a harder instance; both much larger and much smaller gadgets will only make the problem easier.

\paragraph{Discussion and open questions.}

If we write $D(n)$ for the deterministic complexity and $R(n)$ for the randomized complexity of a given \lcl{}, we have now seen that we can engineer \lcl{}s that satisfy e.g.\ any of the following:
\begin{align*}
    R(n) &\approx D(n), \\
    R(n) &\approx \sqrt{D(n)}, \\
    R(n) &\approx \log D(n).
\end{align*}
However, if we look at the ratio $D(n)/R(n)$, we see that all examples with $R(n) = o(D(n))$ happen to satisfy
\[
    \frac{D(n)}{R(n)} = \Theta\biggl(\frac{\log n}{\log \log n}\biggr).
\]
The main open question is whether we can construct \lcl{}s with
\[
    \frac{D(n)}{R(n)} \gg \log n.
\]
This question is closely connected to the complexity of \emph{network decompositions}: the result of \citet{Ghaffari2018} implies that, in the context of \lcl{}s, any randomized algorithm running in time $R(n)$ can be transformed to a deterministic algorithm running in time $D(n) = O\bigl( R(n) \netdec(n) + R(n) \log^2 n \bigr)$, where $\netdec(n)$ is the time required to compute a $(\log n, \log n)$-network decomposition in graphs of size $n$ with a deterministic distributed algorithm. A recent breakthrough by \citet{Rozhon2019} showed that the network decomposition problem can be solved in polylogarithmic time. In particular, they provided an algorithm running in $O(\log^7 n)$ rounds in the \local{} model. This implies that the ratio $D(n)/R(n)$ cannot be more than polylogarithmic. However, whether the ratio can be superlogarithmic is an open question: if we could improve our result slightly and show the existence of \lcl{}s satisfying $D(n)/R(n) = \omega(\log^2 n)$, we would obtain a superlogarithmic lower bounds for the network decomposition problem---a long-standing open question.

\section{Preliminaries}

\paragraph{Model.}
The \local{} model is synchronous, that is, the computation proceeds in synchronous rounds. At each round, each entity sends messages to its neighbors, receives messages from them, and performs some computation based on the data it receives. In this model, the size of messages can be arbitrarily large, and the computational power of an entity is not bounded. The time complexity $T$ of an algorithm running in the \local{} model is given by the number of rounds that entities need to run the algorithm in order to solve a problem.

The \local{} model is equivalent to a model where each entity:
(i)~gathers its radius-$T$ neighborhood, i.e., the entity learns the structure of the network around it up to distance $T$, along with the inputs that the entities in this neighborhood might have;
(ii)~performs some computation based on the data that has been gathered;
(iii)~produces its own local output.

A distributed network is represented by a graph with nodes and edges, where a node represents a specific entity of the network, and there is an edge between two nodes if and only if there is a communication link between the entities that they represent. We denote a graph by $G=(V,E)$, where $V$ is the set of nodes and $E$ the set of edges. The degree $d$ of a node is the number of its incident edges. The incident edges are numbered, that is, we assume that a node has \emph{ports} numbered from $1$ to $d$ where incident edges are connected to. Each node, when receiving a message, knows the port from which the message arrives. We denote by $\Delta$ the maximum degree in the graph. 

For technical reasons, we deviate from the usual assumptions and we allow $G$ to be disconnected and to contain self loops and parallel edges. While all upper and lower bounds that we will present hold in this larger class of graphs, our final results hold for simple graphs as well.

\paragraph{Locally checkable labeling problems.}
\lcl{} problems are defined on constant degree graphs, i.e., graphs where $\Delta = O(1)$. Each node has an input label from a constant-size set $\Sigma_\inn$, and must produce an output label from a constant-size set $\Sigma_\out$. The output must be locally checkable, that is, there must exist a constant-time distributed algorithm that can check the correctness of a solution. If the solution is globally correct, this algorithm must accept on all nodes, otherwise it must reject on at least one node. A distributed algorithm $\mathcal{A}$ solving an \lcl{} problem in time $T(n)$ is an algorithm that, for any graph $G$ with $n$ nodes, given $n$ and $\Delta$, runs in $T(n)$ rounds and outputs a label for each node, such that the \lcl{} constraints are satisfied at each node. For randomized algorithms, we require global high probability of success, that is, the probability that the solution is wrong must be at most $\frac{1}{n}$.

An example of an \lcl{} problem is the proper ($\Delta +1$)-coloring of the nodes of a graph: nodes have all the same input, that is a special character denoting the empty input label, and they must produce as output a color in $\{1,\dots,\Delta+1\}$. In a proper coloring it must hold that, for any pair of neighbors, their colors are different. It is easy to see that, if the graph is properly colored, each node will see a proper solution locally, otherwise there will be two neighbor nodes that will have the same color, noticing the error. Many other natural problems fall in the category of \lcl{}s, such as edge coloring, maximal matching, maximal independent set, sinkless orientation, etc. 

Deviating from the common way of writing inputs and outputs of \lcl{}s only on nodes (or, occasionally, edges), we will write inputs and outputs on nodes, edges, and node-edge pairs.
This allows us to conveniently assign different labels to each half of an edge, something we will make use of in Section~\ref{sec:thegadget}.
For technical reasons we restrict our considerations to the subclass of \lcl{}s where the local constraints determining whether a solution is correct can be checked ``on nodes and edges".
Note that almost all commonly studied \lcl{} problems can be reformulated in this form, by requiring each node to return, apart from its own output, also the outputs of all nodes at a constant distance.
Formally, these \emph{node-edge-checkable} \lcl{}s, or \nelcl{}s, are defined as follows.

\paragraph{Node-edge-checkable \lcl{}s.}
Let $B = \{ (v,e) \in V \times E \mid v \in e \}$ be the set of incident node-edge pairs.
The input to an \nelcl{} is given by assigning an input label $i \in \Sigma_{\inn}$ to each $x \in V \cup E \cup B$; a solution to an \nelcl{} is given by each node $v$ assigning an output label $o \in \Sigma_{\out}$ to itself and to each ``incident" element of $V \cup E \cup B$, where for each edge $e = \{ u, v \}$, nodes $u$ and $v$ have to choose the same output label for $e$.
Apart from the sets $\Sigma_{\inn}$ and $\Sigma_{\out}$ of input and output labels, an \nelcl{} is defined by a set $C_N$ of node constraints and a set $C_E$ of edge constraints, where $C_N$ describes for each node $v$ which output label configurations on $\{ v \} \cup \{ \{ v, u \} \in E \} \cup \{ (v, e) \in B \}$ are correct (depending on the input labels on those nodes, edges, and node-edge pairs), and $C_E$ describes for each edge $e = \{ u, v \}$ which output label configurations on $\{ u, v, e, (u, e), (v, e) \}$ are correct (again, depending on the input labels on those elements of $V \cup E \cup B$).
Note that $C_N$ and $C_E$ do not depend on the choice of $v$ or $e$ in the above description, or on the port numbers or identifiers assigned to the edges or nodes of the graph.

As an example, let us see how sinkless orientation can be formulated as an \nelcl{}. Each node $v$ has to output on each incident edge, or more precisely on each $(v,e) \in B$, either the label $\llout$ (outgoing) or the label $\lin$ (incoming). The constraint on nodes is that there must exist an incident edge labeled $\llout$. This guarantees that no node is a sink. The constraint on edges is that, whenever an endpoint is labeled $\lin$, the other endpoint must be labeled $\llout$, and vice versa. This guarantees that the edges are oriented consistently. Note that in this example, the constraints are independent of any input labels. See Figure \ref{fig:node-edge-labels} for an illustration.

\begin{figure}[t]
	\centering
	\includegraphics[scale=0.8]{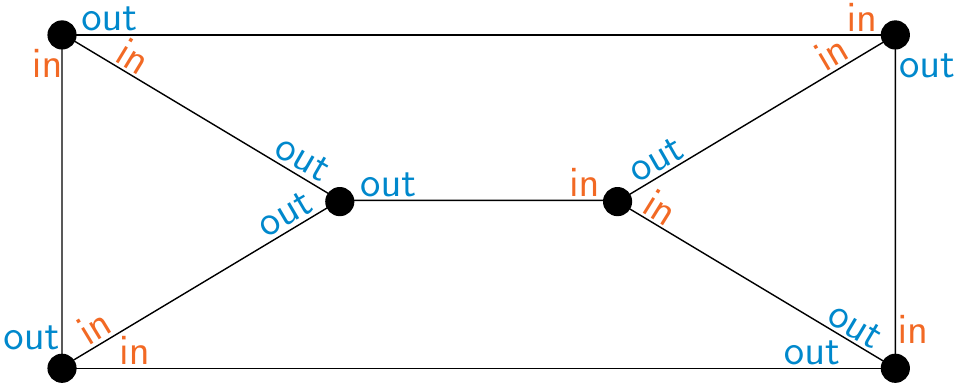}
	\caption{Sinkless orientation in the node-edge pair formalism.}\label{fig:node-edge-labels}
\end{figure}

\section{Padded \lcl{}s}\label{sec:padded-lcl}
In this section we provide a technique that constructs new \lcl{}s in a black box manner.
More precisely, given an \nelcl{} $\Pi$ and a collection of graphs, so-called gadgets, with certain properties, we can construct a new \nelcl{} $\Pi'$ with changed deterministic and randomized complexities.
Informally, the idea is that the hard graphs for $\Pi'$ are so-called padded graphs, i.e., graphs obtained by taking some graph $G$ and replacing each node of $G$ with a gadget, thereby ``padding" $G$.
See Figure \ref{fig:padding} for an example.

Our new \nelcl{} $\Pi'$ is constructed in a way that ensures that in such a padded graph solving $\Pi'$ is equivalent to solving $\Pi$ on the underlying initial graph $G$.
Moreover, the padding itself will make sure that the distances between nodes of $G$ increase; in other words, simulating an algorithm on $G$ that solves $\Pi$ incurs an additional communication overhead.
Consequently the hard graphs for $\Pi'$ are given by those instances where the underlying graph $G$ belongs to the hard graphs for $\Pi$ and the size of the gadgets used in the padding is finely balanced such that (1)~the underlying graph $G$ is large enough (as a function of the number $n$ of nodes of the padded graph) to ensure a sufficiently large runtime for solving $\Pi$ on $G$, and (2)~the gadgets are large enough to ensure a sufficiently large communication overhead.

We will start the section by defining gadgets and families thereof; in particular, we will describe their special properties that will enable us to define $\Pi'$ and prove that it has the desired complexities.
Then we will give a formal definition of padded graphs which, intuitively, are the key concept for the subsequent definition of the new \nelcl{} $\Pi'$, even though, formally, they do not appear in the definition.
After defining $\Pi'$, we will conclude the section by showing how the complexity of the new \nelcl{} $\Pi'$ is related to the complexity of the old \nelcl{} $\Pi$.

The exact relation between the complexities of the two \nelcl{}s (which relies on the subsequently defined concept of a $(d, \Delta)$-gadget family) is given in Theorem~\ref{thm:pi_to_newpi}. 
Let $\tdet(\Pi, N)$, resp.\ $\trand(\Pi, N)$, denote the deterministic, resp.\ randomized, complexity of an \lcl{} $\Pi$ on instances of size $N$.
Then the following holds.
\begin{theorem}\label{thm:pi_to_newpi}
	Let $f\colon \NN \to \NN$ be a function such that, for each $x \in \NN$, we have $f(x) \leq x$  and there exists some $y \in \NN$ with $f(y) = x$.
	For each \nelcl{} problem $\Pi$ and each $(d, \Delta)$-gadget family $\gadget$, there exists an \nelcl{} problem $\Pi'$ with deterministic complexity $O\bigl(\tdet(\Pi,n) \cdot d(n)\bigr)$ and $\Omega\bigl(\tdet(\Pi,f(n)) \cdot d(\frac{n}{f(n)})\bigr)$ and randomized complexity $O\bigl(\trand(\Pi,n) \cdot d(n)\bigr)$ and $\Omega\bigl(\trand(\Pi,f(n)) \cdot d(\frac{n}{f(n)})\bigr)$.
\end{theorem}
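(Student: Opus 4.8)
The plan is to prove the theorem by giving a precise construction of $\Pi'$ from $\Pi$ and $\gadget$, and then establishing the four complexity bounds separately: the two upper bounds (one deterministic, one randomized) via an algorithm that either solves the instance locally or decodes it back to an instance of $\Pi$, and the two lower bounds via a reduction that takes a hard instance of $\Pi$, pads it with gadgets of the right depth, and argues that solving $\Pi'$ on the padded graph would yield a fast algorithm for $\Pi$ on the original graph. The function $f$ is the knob controlling the trade-off described in the introduction: given a padded graph on $n$ nodes, the adversary may choose an underlying graph $G$ on roughly $f(n)$ nodes and gadgets of size roughly $n/f(n)$ (hence depth $d(n/f(n))$, using the defining property of a $(d,\Delta)$-gadget family), so solving $\Pi'$ on $n$ nodes is at least as hard as solving $\Pi$ on $f(n)$ nodes with a communication blow-up of $d(n/f(n))$; the hypotheses $f(x)\le x$ and surjectivity of $f$ onto $\NN$ guarantee that such an instance exists for every target size. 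Since this construction and analysis is the same for both the deterministic and the randomized setting (the randomized case only differs in that we also have to track the $1/n$ failure probability through the reduction), I would phrase the argument uniformly in a complexity parameter and instantiate it twice.

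First I would set up the construction of $\Pi'$. Its label sets extend those of $\Pi$ with the extra symbols needed to (i) encode the internal structure of a valid gadget — parent/child pointers, port labels, depth counters, the distinguished ``center'' and the connector half-edges — so that membership in a valid gadget is locally checkable at distance $O(d(n))$, and (ii) encode a locally checkable proof of error, carrying only constantly many bits per node, that can be produced whenever the input graph fails to be a clean union of valid gadgets wired together. The node/edge constraints $C_N, C_E$ of $\Pi'$ are then: at each node, either all of its constraints certify that it lies in a valid gadget and the $\Pi$-solution carried on the gadget centers satisfies $\Pi$'s constraints (the gadget's connector half-edges playing the role of the edges of the underlying graph $G$), or the node exhibits a valid error proof, in which case it and its neighbours are treated as ``don't care.'' The two bullet points in the introduction — an algorithm cannot falsely claim the input is invalid, and the adversary gains nothing from invalid gadgets — are exactly the two soundness/robustness lemmas I would isolate and prove about this construction; they are what make the black-box reduction go through on the class of \emph{all} bounded-degree graphs rather than under a promise.

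Next, the upper bounds. Given an instance of $\Pi'$ on $n$ nodes, the algorithm first spends $O(d(n))$ rounds checking local validity; wherever it finds an error it outputs the trivial error proof, which is why invalid regions cost nothing. On the valid part, the graph restricted to gadget centers is isomorphic to some underlying graph $G$ on at most $n$ nodes with all pairwise distances scaled up by $\Theta(d(\cdot))$, so the algorithm simulates the assumed $\tdet(\Pi,n)$-round (resp.\ $\trand(\Pi,n)$-round) algorithm for $\Pi$ on $G$, each round of which costs an $O(d(n))$ factor in the padded graph; this yields the $O(\tdet(\Pi,n)\cdot d(n))$ and $O(\trand(\Pi,n)\cdot d(n))$ bounds, with the randomized success probability preserved up to the union bound over at most $n$ centers. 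For the lower bounds I argue contrapositively: an algorithm for $\Pi'$ running in $o(\tdet(\Pi,f(n))\cdot d(n/f(n)))$ rounds, run on the graph obtained by padding a worst-case $f(n)$-node instance of $\Pi$ with depth-$\Theta(d(n/f(n)))$ gadgets, would (by the robustness lemma, since this padded instance is valid) induce a correct $\Pi$-solution on the underlying $f(n)$-node graph in $o(\tdet(\Pi,f(n)))$ rounds after dividing out the blow-up — a contradiction; the randomized lower bound is identical, noting that the padded graph has $\Theta(n)$ nodes so a $1/n$ failure probability on $\Pi'$ translates to $o(1)$, hence (after standard amplification) $1/f(n)$, failure on $\Pi$.

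I expect the main obstacle to be the first challenge flagged in the introduction: designing the error-proof mechanism so that it has \emph{constantly many bits per node} yet is strong enough that no locally checkable certificate can ``lie'' about an input that is in fact a clean padded graph. Getting a constant-bit locally checkable proof for ``this $O(d(n))$-radius region is not a valid gadget'' — covering malformed pointer structures, inconsistent depth counters, gadgets of the wrong size, and bad wiring between gadgets — while ensuring the proof is forced to exist exactly when there really is an error, is the delicate part; everything else (the simulation arguments, the arithmetic relating $n$, $f(n)$, and $d$) is routine once the construction is in place. This is presumably why the paper defers the detailed gadget construction to Section~\ref{sec:thegadget} and only states the black-box reduction here.
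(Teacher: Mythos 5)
Your proposal follows the same structure as the paper's proof: the same construction of $\Pi'$ (input graph either decomposes into locally-checkable gadgets carrying a solution of $\Pi$ on the contracted ``virtual'' graph, or nodes exhibit a constant-bit locally checkable error proof), the same simulation upper bound ($O(d(n))$ rounds to run $\gadcheck$, then $O(d(n))$ overhead per simulated round of $\Pi$ on a virtual graph of at most $n$ nodes), and the same padding lower bound (pad a worst-case $f(n)$-node instance of $\Pi$ with $\Theta(n/f(n))$-node gadgets of depth $\Theta(d(n/f(n)))$, with surjectivity of $f$ guaranteeing every target size of $\Pi$ is hit). The two soundness lemmas you isolate are exactly what the paper bakes into the definition of $\Psi_{\gadget}$ and the output-label constraints.

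One detail in your sketch would not survive the node-edge-checkability requirement as written: you place the $\Pi$-solution ``on the gadget centers,'' but with only node and edge constraints there is no way to compare a center's choice against what happens on the far-away port edges, which is where $\Pi$'s edge constraints must be enforced. The paper instead has \emph{every} node of a valid gadget output the same $\Sigma_\llist$-tuple (valid-port set, virtual node's $\Pi$-inputs, and virtual node's $\Pi$-outputs for node, edges, and half-edges), checks equality of these tuples across each $\gadedge$, and then evaluates $\Pi$'s edge constraint only at $\portedge$ edges using the two endpoints' replicated tuples; this replication is what makes the whole thing an \nelcl. Also, your randomized analyses are phrased via union bound and amplification, whereas the paper avoids both by allowing disconnected graphs: the virtual graph (resp.\ the $f(n)$-node hard instance plus isolated padding) is treated as a connected component of an $n$-node instance, so the $1/n$ failure bound carries over directly and, since $f(n) \le n$, already beats $1/f(n)$. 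These are small fixes; the overall argument is the paper's.
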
 

\subsection{Gadgets}\label{sec:gadgets}
\begin{definition}\label{def:gadget}
	An \emph{$(n,D)_\Delta$-gadget} $F$ is a (labeled) connected graph that satisfies the following:
	\begin{itemize}
		\item The number of nodes is $n$.
		\item There are exactly $\Delta$ special nodes labeled $\port_i$, for $1 \le i \le \Delta$, called ports. All other nodes are labeled $\noport$.
		\item The diameter of $F$ and hence also the pairwise distances between the ports are at most $D$.
	\end{itemize}

	Let $d\colon \NN \to \NN$ be some function.
	A \emph{$(d, \Delta)$-gadget family} $\gadget$ is a set of graphs satisfying the following:
	\begin{itemize}
		\item Each $G \in \gadget$ is an $(n,O(d(n)))_\Delta$-gadget for some $n$.
		\item For each $n \in \NN$, there exists some $G \in \gadget$ with $\Theta(n)$ nodes such that the pairwise distances between the ports are all in $\Theta(d(n))$. Let this gadget be $\gadrepr_n$.
		\item There is an \nelcl{} $\Psi_{\gadget}$ with the following properties, where $H$ denotes the input graph for $\Psi_{\gadget}$.
		\begin{itemize}
			\item The output label set for $\Psi_{\gadget}$ is $\{ \lgadok \} \mathbin{\dot\cup} L_{\lerr}$, for some finite set $L_{\lerr}$.
			\item If $H \in \gadget$, then the unique (globally) correct solution for $\Psi_{\gadget}$ uses only the output label $\lgadok$.
			\item If $H \notin \gadget$, then there exists a (globally) correct solution for $\Psi_{\gadget}$ that uses only output labels from $L_{\lerr}$.
			\item There is a deterministic distributed algorithm $\gadcheck$ that, given an upper bound $n$ of $N$, where $N$ is the number of nodes of $H$, solves $\Psi_{\gadget}$ in $O(d(n))$ rounds. Moreover, if $H \notin \gadget$, then $\gadcheck$ uses only output labels from $L_{\lerr}$. We call the (global) output of $\gadcheck$ a \emph{locally checkable proof of error}.
		\end{itemize}
	\end{itemize}
\end{definition}

\subsection{Padded graphs}\label{sec:padd}
Intuitively, a padded graph is a graph obtained by starting from some arbitrary graph and replacing each node with a gadget $F \in \gadget$. We now formally define the family $\gadget(G)$ of padded graphs for a given graph $G$.

\begin{definition}
	Given a graph $G$ with maximum degree $\Delta$ and a $(d, \Delta)$-gadget family $\gadget$, the graph family $\gadget(G)$ is the set of all graphs that can be obtained by the following process.
	
	Start from $G = (V,E)$. For each node $v \in V$ pick a gadget $F \in \gadget$, where different gadgets may be picked for different nodes.
	Let $C_v$ be the gadget chosen for node $v$.
	The final graph is the union of the $C_v$ (over all $v \in V$), augmented by the following additional edges: for any edge $\{u,v\} \in E$ connecting port $a$ of $u$ to port $b$ of $v$, add an edge between node $\port_a$ of $C_u$ and $\port_b$ of $C_v$.
	Moreover, in the final graph we label
	each edge already present in the union of the $C_v$ with $\gadedge$, and each edge that has been added in the augmentation step with $\portedge$.
\end{definition}

\subsection{New \lcl{}}
Given an \lcl{} $\Pi$ and a $(d, \Delta)$-gadget family $\gadget$, in this section we define a new \lcl{} $\Pi'$ that, informally, can be described as follows.
Each edge $e$ of the input graph $G$ for $\Pi'$ is assigned a special label that indicates whether $e$ belongs to a gadget or to ``the underlying graph", denoted by $H$.
Intuitively, $H$ is the graph obtained by contracting the connected components induced by the edges labeled as belonging to a gadget.
For each such connected component, there are two possibilities: Either it constitutes a gadget from our gadget family $\gadget$, in which case we call it a valid gadget, or it does not, in which case we call it an invalid gadget.

In each invalid gadget, $\Pi'$ can be solved correctly by the containing nodes providing a locally checkable proof of the invalidity of the gadget.
Consider the graph obtained by deleting all gadgets where the contained nodes proved an error.
Assuming that all invalid gadgets have been claimed to be invalid by their contained nodes (we do not require that nodes in an invalid gadget actually choose this option) and consequently deleted, the obtained graph $G'$ may still not be a padded graph as described in Section~\ref{sec:padd}. In fact, while padded graphs satisfy that a gadget $F$ corresponding to node $v$ of degree $d$ has nodes $\port_1,\ldots,\port_d$ connected to port nodes of other gadgets, $G'$ may have some port nodes connected to removed gadgets, thus valid port nodes are an arbitrary subset of $\{\port_i ~|~ 1 \le i \le \Delta \}$. This implies that we can transform $G'$ to a valid padded graph in a natural way, by just mapping the $d$ ($0\le d \le \Delta$) valid port nodes to the ports from $1$ to $d$. We will actually require nodes to produce such a mapping, and mark each port node as valid or invalid (see Figure \ref{fig:port-mapping} for an example).
Then, $\Pi'$ is solved correctly if the nodes solve $\Pi$ on the graph obtained from $G'$ by contracting all valid gadgets.

\begin{figure}[t]
	\centering
	\includegraphics[scale=0.7]{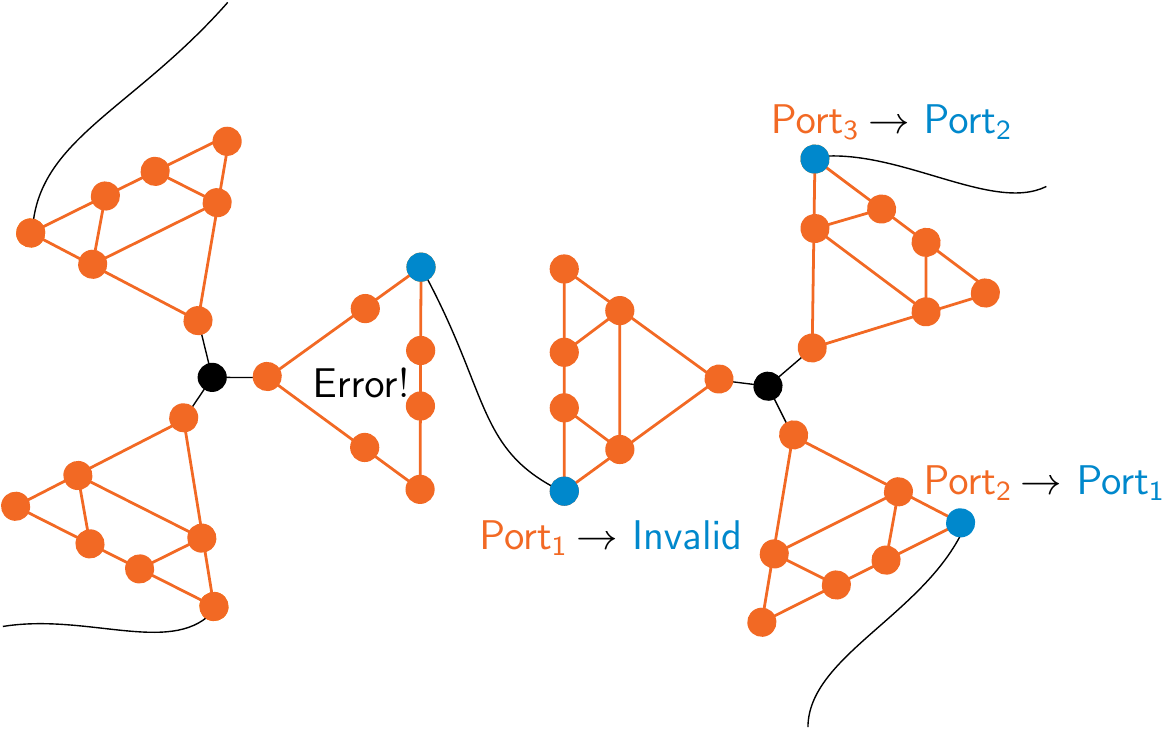}
	\caption{An example of port mapping: $\port_1$ is connected to an invalid gadget, while  $\port_2$ and  $\port_3$ are connected to valid gadgets; nodes of the valid gadget produce the mapping $2\rightarrow 1$, $3\rightarrow 2$.}\label{fig:port-mapping}
\end{figure}

Some care is needed to ensure that the above rough outline can be expressed in terms of \nelcl{} constraints and to deal with the subtleties introduced thereby.
We now proceed by defining $\Pi'$.

Let the \nelcl{} $\Pi$ be given by input label set $\Sigma^{\Pi}_{V,\inn} \cup \Sigma^{\Pi}_{E,\inn} \cup \Sigma^{\Pi}_{B,\inn}$, output label set $\Sigma^{\Pi}_{V,\out} \cup \Sigma^{\Pi}_{E,\out} \cup \Sigma^{\Pi}_{B,\out}$, node constraint set $C^{\Pi}_V$, and edge constraint set $C^{\Pi}_E$.
Let $\gadget$ be an arbitrary $(d, \Delta)$-gadget family and let $\Psi_{\gadget}$ be as described in Definition~\ref{def:gadget}.
Recall that the graphs in $\gadget$ are labeled.
Let \[\bigl(\Sigma^{\gadget}_{V,\inn} \times \{\port_1, \dots, \port_{\Delta}, \noport\}\bigr) \cup \Sigma^{\gadget}_{E,\inn} \cup \Sigma^{\gadget}_{B,\inn}\] denote the labels used for labeling the graphs in $\gadget$, which will be our input labels for $\Psi_{\gadget}$.
Let $\Sigma^{\gadget}_{V,\out} \cup \Sigma^{\gadget}_{E,\out} \cup \Sigma^{\gadget}_{B,\out}$, $C^{\gadget}_V$, and $C^{\gadget}_E$ denote the output labels, node constraints, and edge constraints for $\Psi_{\gadget}$, respectively.
In particular, we have \[\Sigma^{\gadget}_{V,\out} \cup \Sigma^{\gadget}_{E,\out} \cup \Sigma^{\gadget}_{B,\out} = \lgadok \mathbin{\dot\cup} L_{\lerr}.\]

W.l.o.g., we can (and will) assume that both in $\Pi$ and in $\Psi_{\gadget}$, each element of $V \times E \times B$ is assigned exactly one input label (and each will receive exactly one output label) as we can encode multiple labels in one label and add an ``empty label" for the case that no label was assigned.
However, for convenience, we might deviate from this underlying encoding in the description of the new \nelcl{} $\Pi'$.
We now give a formal definition of $\Pi'$. We will later provide an informal explanation of each part.

\paragraph{Input labels.}
\begin{itemize}[noitemsep]
	\item Each node has a label in $\Sigma^{\Pi}_{V,\inn} \times \Sigma^{\gadget}_{V,\inn} \times \{\port_1, \dots, \port_{\Delta}, \noport\}$.
	\item Each edge has a label in $\Sigma^{\Pi}_{E,\inn} \times \Sigma^{\gadget}_{E,\inn} \times \{\portedge, \gadedge\}$.
	\item Each element of $B$ has a label in $\Sigma^{\Pi}_{B,\inn} \times \Sigma^{\gadget}_{B,\inn}$.
\end{itemize}
\paragraph{Output labels.}
\begin{itemize}[noitemsep]
	\item Each node must label itself with a label from $\Sigma_{\llist} \times \{\lporterr_1, \lporterr_2, \lnoporterr\} \times \Sigma^{\gadget}_{V,\out}$, where
	\[
		\Sigma_{\llist} = 2^{\{\port_1, \dots, \port_{\Delta}\}} \times \Sigma^{\Pi}_{V,\inn}\times (\Sigma^{\Pi}_{E,\inn})^{\Delta} \times (\Sigma^{\Pi}_{B,\inn})^{\Delta} \times \Sigma^{\Pi}_{V,\out}\times (\Sigma^{\Pi}_{E,\out})^{\Delta} \times (\Sigma^{\Pi}_{B,\out})^{\Delta} \enspace.
	\]
	\item Each edge must be labeled with either $\epsilon$ or a label from $\Sigma^{\gadget}_{E,\out}$.
	\item Each element of $B$ must be labeled with either $\epsilon$ or a label from $\Sigma^{\gadget}_{B,\out}$.
\end{itemize}

\paragraph{Constraints.}
\begin{enumerate}
	\item\label{item:eps} Each edge with input label $\portedge$ has to be labeled $\epsilon$, each edge with input label $\gadedge$ has to be labeled with a label from $\Sigma^{\gadget}_{E,\out}$. Each $(v, e) \in B$ has to be labeled $\epsilon$ if $e$ has input label $\portedge$, and with a label from $\Sigma^{\gadget}_{B,\out}$ if $e$ has input label $\gadedge$.
	\item\label{item:check} On each connected component of the subgraph induced by the edges labeled $\gadedge$, the \nelcl{} $\Psi_{\gadget}$ has to be solved correctly. Put in a local way, for each node $v$ the node constraints $C^V_{\gadget}$ of $\Psi_{\gadget}$ have to be satisfied, where we ignore each edge incident to $v$ that is labeled $\portedge$, and for each edge with input label $\gadedge$ the edge constraints $C^E_{\gadget}$ of $\Psi_{\gadget}$ have to be satisfied.

    \begin{remark*}
    Here, as in the following descriptions, we will consider the labels defined above as a collection of several labels in the canonical way, e.g., each edge has three input labels, one each from $\Sigma^{\Pi}_{E,\inn}$, $\Sigma^{\gadget}_{E,\inn}$, and $\{\portedge, \gadedge\}$. Also, for simplicity, we will not explicitly mention which of the labels are relevant for the respective constraint if this is clear from the context. For instance, the (only) labels the above constraint for solving $\Psi_{\gadget}$ talks about (apart from the labels from $\{\portedge, \gadedge\}$ that determine which edges are considered for the constraint) are the input and output labels for $\Psi_{\gadget}$, i.e., the input labels from $\Sigma^{\smash\gadget}_{V,\inn}$, $\{\port_1, \dots, \port_{\Delta}, \noport\}$, $\Sigma^{\smash\gadget}_{E,\inn}$, and $\Sigma^{\smash\gadget}_{B,\inn}$, and the output labels from $\Sigma^{\smash\gadget}_{V,\out}$, $\Sigma^{\smash\gadget}_{E,\out}$, and $\Sigma^{\smash\gadget}_{B,\out}$.
    \end{remark*}
	\item\label{item:error} Each node $v$ has to be labeled $\lporterr_2$ if and only if $v$ has input label $\port_i$ for some $i$ and, either there is no incident edge labeled $\portedge$, or there are at least two incident edges labeled $\portedge$. Otherwise $v$ has to be labeled either $\lporterr_1$ or $\lnoporterr$.
	\item\label{item:niceedge} For each edge $e = \{ u, v \}$ with input label $\portedge$ the following holds: If $u$ and $v$ are labeled $\port_i$ and $\port_j$ for some $1 \leq i, j \leq \Delta$, respectively, and the output label $\in \Sigma^{\gadget}_{V, \out}$ of both $u$ and $v$ is $\lgadok$, then the output label $\in \{\lporterr_1, \lporterr_2, \lnoporterr\}$ of both $u$ and $v$ cannot be $\lporterr_1$; if $u$ is labeled $\port_i$ for some $i$ and at least one of $u$ and $v$ has input label $\noport$ or an output label from $L_{\lerr}$, then the output label $\in \{\lporterr_1, \lporterr_2, \lnoporterr\}$ of $u$ cannot be $\lnoporterr$.
	\item\label{item:nodelist} For each node $v$ with incident edges $e_1, \dots, e_k$, if at least one of $v, e_1, \dots, e_k, (v, e_1), \dots, (v, e_k)$ is assigned an output label from $L_{\lerr}$ and none of the node constraints mentioned above are violated, then the node constraint for $v$ is always satisfied, irrespective of the conditions below. If all of the mentioned elements of $V \cup E \cup B$ are assigned an output label from $\{ \lgadok, \epsilon \}$, then the following conditions have to be satisfied for $v$, where \[\ell^v_{\llist} = \bigl(S, \iota^V, \iota^E_1, \dots, \iota^E_{\Delta}, \iota^B_1, \dots, \iota^B_{\Delta}, o^V, o^E_1, \dots, o^E_{\Delta}, o^B_1, \dots, o^B_{\Delta}\bigr)\] denotes the $\Sigma_{\llist}$-part of the output label assigned to $v$:
	\begin{itemize}
		\item If $v$ is labeled $\port_i$ for some $1 \leq i \leq \Delta$, then the label $\port_i$ is an element of $S$ if and only if the output $\in \{\lporterr_1, \lporterr_2, \lnoporterr\}$ of $v$ is $\lnoporterr$.
		\item If $v$ is labeled $\port_1$, then $\iota^V$ is $v$'s input label from $\Sigma^{\Pi}_{V, \inn}$.
		\item If $v$ is labeled $\port_i$ for some $1 \leq i \leq \Delta$, and $\port_i \in S$, then for any
			incident edge $e$ labeled $\portedge$, the labels $\iota^E_i$ and $\iota^B_i$ coincide with $e$'s input label from $\Sigma^{\Pi}_{E, \inn}$ and $(v,e)$'s input label from $\Sigma^{\Pi}_{B, \inn}$, respectively.
		\item The output label of $v$ encodes a configuration that satisfies the node constraints from $C^{\Pi}_V$. More precisely, let $\alpha$ be the bijection that monotonically maps the elements of $\{ 1, \dots, |S|\}$ to the indices of the elements in $S$, and consider a (hypothetical) node $u$ of degree $|S|$ with incident edges $e'_1, \dots, e'_{|S|}$. Then labeling $u, e'_1, \dots, e'_{|S|}, (u,e'_1), \dots, (u,e'_{|S|})$ with input labels \[\iota^V, \iota^E_{\alpha(1)}, \dots, \iota^E_{\alpha(|S|)}, \iota^B_{\alpha(1)}, \dots, \iota^B_{\alpha(|S|)}\] and output labels \[o^V, o^E_{\alpha(1)}, \dots, o^E_{\alpha(|S|)}, o^B_{\alpha(1)}, \dots, o^B_{\alpha(|S|)},\] respectively, yields a correct node configuration at $u$ according to $C^{\Pi}_V$.
	\end{itemize}
	\item\label{item:edgelist} Similarly, for each edge $e = \{u, v\}$, if at least one of $u, v, e, (u,e), (v,e)$ is assigned an output label from $L_{\lerr}$ and none of the node constraints mentioned above are violated, then the edge constraint for $e$ is always satisfied, irrespective of the conditions below. If all of the mentioned elements of $V \cup E \cup B$ are assigned output labels from $\{ \lgadok, \epsilon \}$, then the following conditions have to be satisfied for $e$, where $\ell^u_{\llist}$ and $\ell^v_{\llist}$ denote the $\Sigma_{\llist}$-part of the output labels assigned to $u$ and $v$, respectively, and we use the above notation augmented with a superscript to indicate the respective node:
	\begin{itemize}
		\item If $e$ is labeled $\gadedge$, then $\ell^u_{\llist} = \ell^v_{\llist}$.
		\item If $e$ is labeled $\portedge$, and $u$ and $v$ are labeled $\port_i$ and $\port_j$ for some $1 \leq i, j \leq \Delta$, respectively, then $\iota^{E,u}_{\alpha(i)} = \iota^{E,v}_{\alpha(j)}$ and $o^{E,u}_{\alpha(i)} = o^{E,v}_{\alpha(j)}$, and, for a (hypothetical) edge $e' = \{ u', v' \}$, labeling $u', v', e', (u', e'), (v', e')$ with input labels \[\iota^{V,u}, \iota^{V,v}, \iota^{E,u}_{\alpha(i)}, \iota^{B,u}_{\alpha(i)}, \iota^{B,v}_{\alpha(j)}\] and output labels \[o^{V,u}, o^{V,v}, o^{E,u}_{\alpha(i)}, o^{B,u}_{\alpha(i)}, o^{B,v}_{\alpha(j)},\] respectively, yields a correct edge configuration at $e'$ according to $C^{\Pi}_E$.
	\end{itemize} 
\end{enumerate}

\paragraph{Informal description.}
\begin{itemize}
	\item \emph{Input labels.} Elements of $B$ can be intuitively seen as endpoints of an edge, thus we will refer to them as ``half-edges''. Each node, each edge, and each half-edge has an input for $\Pi$ and an input for $\Psi_{\gadget}$. Also, each node (resp.\ edge) may have a special label indicating if it is a port node (resp.\ edge).
	\item \emph{Output labels.} Each node must produce a tuple $(s, p, g)$. The labeling $g$ must be a valid output for $\Psi_{\gadget}$. The labeling $p$ is used to indicate the (in)correctness of the port connections. The labeling $s = (l,i,o)$ is the one that actually contains a solution for $\Pi$. First, $l$ contains a list of valid ports of the gadget. Then, $i$ contains a copy of all the inputs of the port nodes, as well as the inputs of their edges and half-edges, that is, everything that is needed to know the input of a virtual node. Finally, $o$ contains the output of the virtual node, described as node, edges and half-edges outputs. All these labels will be useful to check the validity of the output for $\Pi$ in a local manner.
	\item \emph{Constraints.} For each aforementioned constraint, we provide an informal description, by following the same order.
	\begin{enumerate}
		\item[1.] We require that outputs for $\Psi_{\gadget}$ do not cross gadget boundaries. Thus, we require that port edges and half-edges are labeled $\epsilon$, while everything else must actually contain outputs for $\Psi_{\gadget}$.
		\item[2.] Each connected component, given by removing port edges from the graph, must provide a valid solution for $\Psi_{\gadget}$.
		\item[3.--4.] Port nodes do not output errors only in the case in which they are connected to exactly one other port node, and both of them are in a correct gadget.
		\item[5.] Nodes claiming that the gadget is correct must:
		\begin{itemize}
			\item Produce a list of valid ports of the gadget.
			\item Copy the node input of $\port_1$, that will be treated as the input for the virtual node (this is an arbitrary choice, but since nodes may be provided with different inputs for $\Pi$, we need nodes to agree on some specific input for the virtual node).
			\item Copy edge and half-edge inputs of port nodes to the output.
			\item Produce outputs that are correct w.r.t.\ the constraints of $\Pi$.
		\end{itemize}
		\item [6.] On edges we first check that nodes of the same gadget are giving the same output. Then, port edges check that the edge constraints for $\Pi$ are satisfied on the virtual edges.
	\end{enumerate}
\end{itemize}

\subsection{Upper and lower bounds}
We now proceed by showing upper and lower bounds for the defined \nelcl{} $\Pi'$, which, together, will then imply Theorem~\ref{thm:pi_to_newpi}. Intuitively, in order to solve $\Pi'$, nodes can do the following. They start exploring the graph to see if they are in a valid gadget. If they see that their gadget is invalid, then they can produce a locally checkable proof of error. Otherwise, they need to solve the original problem, by first seeing which ports are connected to exactly one valid gadget (on all other ports they can output $\lporterr_1$ or $\lporterr_2$), and then simulating the algorithm for the original problem $\Pi$ on the graph obtained by contracting the valid gadgets to a node and ignoring invalid gadgets.

\subsubsection{Upper bound}
\begin{lemma}\label{lem:newpi_ub}
	Problem $\Pi'$ can be solved in $O(\tdet(\Pi,n) \cdot d(n))$ rounds deterministically, and in $O(\trand(\Pi,n) \cdot d(n))$ rounds randomized.
\end{lemma}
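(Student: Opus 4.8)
The plan is to exhibit explicit deterministic and randomized algorithms for $\Pi'$ that realize the informal description above; the algorithms themselves are short, and essentially all the work lies in verifying that the output obeys constraints~\ref{item:eps}--\ref{item:edgelist}. Both run in three phases: (i)~check gadget validity and emit the $\Psi_{\gadget}$-part of the output; (ii)~mark port errors; (iii)~simulate a $\Pi$-algorithm on the ``virtual graph'' obtained by contracting the valid gadgets. Every node knows $n$, and every $\gadedge$-component and every valid gadget has at most $n$ nodes, which is all we need to run everything within the claimed budgets.

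\emph{Phase (i).} Each node runs the algorithm $\gadcheck$ of \defref{def:gadget} on its connected component in the subgraph induced by the $\gadedge$-labeled edges, with size bound $n$; this takes $O(d(n))$ rounds, labels all nodes, all $\gadedge$-edges, and their half-edges with output labels for $\Psi_{\gadget}$, and labels $\portedge$-edges and their half-edges with $\epsilon$, thereby satisfying constraints~\ref{item:eps} and~\ref{item:check}. By the defining property of $\gadcheck$, a node sees $\lgadok$ iff its $\gadedge$-component is a gadget of $\gadget$, so each node learns locally whether it lies in a valid gadget. \emph{Phase (ii).} In $O(1)$ more rounds, a port node with zero or at least two incident $\portedge$-edges outputs $\lporterr_2$; a port node with exactly one incident $\portedge$-edge $\{u,v\}$ outputs $\lnoporterr$ if $v$ is a port node in a valid gadget and $\lporterr_1$ otherwise; every other node outputs $\lnoporterr$. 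A short case check shows this satisfies constraints~\ref{item:error} and~\ref{item:niceedge}; in fact those two constraints force precisely this choice at every port node inside a valid gadget.

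\emph{Phase (iii).} Call a port node \emph{active} if it output $\lnoporterr$, and for a valid gadget $C$ let $S_C$ be the set of labels of its active ports. In $O(d(n))$ rounds (a valid gadget of $m\le n$ nodes has diameter $O(d(m))=O(d(n))$, using that $d$ is non-decreasing) each node of $C$ learns all of $C$, the set $S_C$, the $\Sigma^{\Pi}_{V,\inn}$-input of $\port_1$ of $C$, and the $\Sigma^{\Pi}_{E,\inn}$- and $\Sigma^{\Pi}_{B,\inn}$-inputs on the $\portedge$-edges at the active ports. Form the virtual graph $H'$: its vertices are the valid gadgets, $C$ gets one incident edge per active port, two active ports joined by a $\portedge$-edge become the two endpoints of one edge, and an active port whose $\portedge$-neighbour is \emph{not} active is capped by a fresh degree-$1$ dummy vertex carrying a fixed dummy $\Pi$-input; the $\Pi$-input of vertex $C$ is that of $\port_1$ of $C$, and half-edge and edge inputs are copied over. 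Since $|V(H')| = O(n)$, the nodes of $H'$ jointly simulate the deterministic algorithm $\AAA$ for $\Pi$ with size parameter $(\Delta+1)n$; one virtual round is emulated by $O(d(n))$ real rounds (spread within a gadget, cross one $\portedge$-edge; a dummy neighbour costs nothing, its input and degree being known), giving $O(\tdet(\Pi,n)\cdot d(n))$ rounds. Each node of $C$ then outputs the common list label $\ell_{\llist}$ recording $S_C$, the copied inputs, and the outputs of $\AAA$ at $C$ and on its incident edges in $H'$. Since $\AAA$ produces a globally correct $\Pi$-solution on $H'$ and the list is shared within each gadget, constraints~\ref{item:nodelist} and~\ref{item:edgelist} hold: on a $\portedge$-edge that does not join two active ports the edge constraint is vacuous, and at every element incident to an $L_\lerr$-label all $\Pi'$-constraints are ``don't care'', which absorbs the invalid gadgets. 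Total time $O(\tdet(\Pi,n)\cdot d(n))$. For the randomized bound only Phase~(iii) changes: run the randomized $\Pi$-algorithm with size parameter $(\Delta+1)n\ge|V(H')|$, giving a globally correct $\Pi$-solution with probability $\ge 1-1/n$; since Phases~(i)--(ii) are deterministic, the whole $\Pi'$-solution is correct with probability $\ge 1-1/n$ in $O(\trand(\Pi,n)\cdot d(n))$ rounds.

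The step I expect to be the main obstacle is Phase~(iii): first, turning the ``contract the valid gadgets'' picture into an honest (multi)graph $H'$ on which $\AAA$ may legitimately run --- this is exactly where the dummy caps for active ports with non-active neighbours are needed, and where one must exploit the ``don't care'' slack of constraints~\ref{item:nodelist} and~\ref{item:edgelist}; and second, checking, literally against the definition of $\Pi'$, that the list labels assembled from the run of $\AAA$ satisfy constraints~\ref{item:nodelist} and~\ref{item:edgelist}, in particular matching a gadget's port labels $\port_i$ with the active-port ordering $\alpha$ used there. Phases~(i)--(ii), and the mild monotonicity of $d$, $\tdet$, $\trand$ needed to pass $n$ (resp.\ $(\Delta+1)n$) as a valid size parameter, are routine.
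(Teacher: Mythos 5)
Your proposal follows the same three-phase outline as the paper's own proof: run $\gadcheck$ on each $\gadedge$-component, set the $\lporterr_1/\lporterr_2/\lnoporterr$ flags, and simulate a $\Pi$-algorithm on a virtual graph obtained by contracting the valid gadgets. The one genuine deviation is in Phase~(iii), where you cap each active port whose $\portedge$-neighbour is not active with a fresh degree-one dummy vertex. The paper instead ``contracts valid gadgets, deletes invalid ones'' and then asserts $\deg(u)=|S|$; your dummy cap is an explicit device that actually makes this degree count come out right in the corner case where a valid gadget has a port node with zero or at least two incident $\portedge$-edges (which forces the $\portedge$-neighbour on the other end to output $\lnoporterr$ while this node outputs $\lporterr_2$). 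So far this is a clarification of, rather than a departure from, the paper's argument.

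The place where you need to slow down is the assertion that ``on a $\portedge$-edge that does not join two active ports the edge constraint is vacuous.'' This is the single load-bearing claim that makes the dummy caps work, and you state it as if it were immediate from the definitions. As written, Constraint~\ref{item:edgelist} activates whenever all of $u,v,e,(u,e),(v,e)$ carry $\lgadok$/$\epsilon$ outputs and $u,v$ are $\port$-labeled, which holds for a half-active $\portedge$-edge (active $u$ in a valid gadget, $\lporterr_2$-node $v$ in a valid gadget). Under that literal reading, the constraint still demands that $o^{V,u},o^{V,v},o^{E,u}_{\cdot},o^{B,u}_{\cdot},o^{B,v}_{\cdot}$ form a correct $C^{\Pi}_E$-configuration; but in your $H'$ the virtual nodes $C_u$ and $C_v$ are \emph{not} adjacent (one faces the dummy, the other has dropped port~$j$), so the simulated $\Pi$-algorithm never coordinates their vertex outputs and there is no guarantee of edge-feasibility. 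The intended semantics of the $\alpha(i),\alpha(j)$ subscripts in the paper's Constraint~\ref{item:edgelist} is presumably that the condition only bites when $\port_i\in S^u$ and $\port_j\in S^v$, which would make your claim correct, but you should argue this explicitly (e.g., by spelling out exactly when $\alpha(i)$ is defined) instead of assuming it---especially since this is precisely the case your dummy caps exist to handle. Two smaller points: passing $(\Delta+1)n$ rather than $n$ as the size parameter to $\AAA$ silently assumes $\tdet(\Pi,(\Delta+1)n)=O(\tdet(\Pi,n))$ (and likewise for $\trand$), whereas the paper avoids this by noting the virtual graph has at most $n$ nodes and allowing disconnected instances; and your appeal to $d$ being non-decreasing is not part of Definition~\ref{def:gadget}, so either add it as a (harmless) standing assumption or replace it with the paper's ``single gadget containing all $n$ nodes'' worst-case argument.
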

\begin{proof}
Let $\gadcheck$ be the algorithm guaranteed by Definition \ref{def:gadget}, able to produce a locally checkable proof of the (in)validity of the gadget or, equivalently, solving the \nelcl{} $\Psi_{\gadget}$.
Each node $v$ starts by executing $\gadcheck$ on the connected components of the subgraph obtained by ignoring edges labeled $\portedge$, which can be done in time $O(d(n))$ where $n$ denotes the number of nodes of the input graph.
For simplicity, we will refer to these connected components as \emph{gadgets}, where we say that a gadget is \emph{valid} if $\gadcheck$ returns label $\lgadok$ everywhere in the gadget, and \emph{invalid} if $\gadcheck$ returns at least one label from $L_{\lerr}$.
Node $v$ then outputs the labels returned by $\gadcheck$ on itself and the incident edges and elements of $B$, thereby providing the part of the output labels corresponding to $\Sigma^{\smash\gadget}_{V, \out}$, $\Sigma^{\smash\gadget}_{E, \out}$, and, $\Sigma^{\smash\gadget}_{B, \out}$, respectively, in the description of the output labels.
Since $\gadcheck$ solves $\Psi_{\gadget}$, this takes care of Constraint~\ref{item:check}; by outputting $\epsilon$ on all edges with input label $\portedge$ and all associated elements of $B$, we see that also Constraint~\ref{item:eps} is satisfied.

If $v$ is a node labeled $\noport$, then it outputs $\lnoporterr$.
If $v$ is labeled $\port_i$ for some $i$, then it gathers its constant-radius neighborhood and checks whether it is a ``valid" port:
If $v$ has no incident edge labeled $\portedge$ or at least two incident edges labeled $\portedge$, then it outputs $\lporterr_2$.
If $v$ has exactly one incident edge labeled $\portedge$, then it checks whether itself or the other endpoint $u$ of the edge is labeled $\noport$ or outputs an element of $L_{\lerr}$ after executing $\gadcheck$.
If one of the conditions is satisfied, then $v$ outputs $\lporterr_1$, otherwise it outputs $\lnoporterr$.
This takes care of Constraints~\ref{item:error} and \ref{item:niceedge}.

If a gadget is invalid, then by Constraints~\ref{item:nodelist} and \ref{item:edgelist}, the constraint for each node and edge in the gadget is satisfied, and we simply complete the outputs for all nodes in the gadget in an arbitrary way that conforms to the output label specifications.
Hence, what remains is to assign to each node $v$ in a valid gadget the $\Sigma_{\llist}$-part $\ell^v_{\llist} = \bigl(S, \iota^V, \iota^E_1, \dots, \iota^E_{\Delta}, \iota^B_1, \dots, \iota^B_{\Delta}, o^V, o^E_1, \dots, o^E_{\Delta}, o^B_1, \dots, o^B_{\Delta}\bigr)$ of the output label in a way that ensures that Constraints~\ref{item:nodelist} and \ref{item:edgelist} are satisfied.
This is the part where, intuitively, we solve the original problem $\Pi$ on the graph obtained by ignoring all invalid gadgets and contracting the valid gadgets to single nodes which are then connected by the edges labeled $\portedge$.
We proceed as follows, considering only nodes in valid gadgets.
Each node collects all input and hitherto produced output information contained in its gadget and the gadget's radius-$1$ neighborhood, and uses the obtained knowledge to determine the first part of $\ell^v_{\llist}$ by choosing $S, \iota^V, \iota^E_1, \dots, \iota^E_{\Delta}, \iota^B_1, \dots, \iota^B_{\Delta}$ in a way that conforms to Constraint~\ref{item:nodelist}.
The choices for the mentioned labels immediately follow from the conditions in Constraint~\ref{item:nodelist} (or can be freely chosen, for some labels).

For determining the second part of $\ell^v_{\llist}$ (corresponding to the actual outputs in the solution of~$\Pi$), each node solves the original problem $\Pi$ as follows:
\begin{itemize}
	\item If the aim is a deterministic algorithm for $\Pi'$, then gather the radius-$O(\tdet(\Pi,n) \cdot d(n))$ neighborhood; if the aim is a randomized algorithm, then gather the radius-$O(\trand(\Pi,n) \cdot d(n))$ neighborhood.
	\item Construct a (partial) virtual graph $H$ by contracting each valid gadget to a single node and deleting all nodes in invalid gadgets---note that this may result in a graph $H$ with parallel edges and/or self-loops, which is why, in our model, we allow graphs to contain these. For each virtual node $u$, assign port numbers from $1$ to $\deg(u)$ to the incident edges in the only way that respects the order of the indices of the gadget's $\port_i$ nodes the corresponding $\portedge$ edges are connected to.
	\item Assign, as identifier of a virtual node, the smallest id of its associated gadget.
	\item Compute $q = (q^V, q^E_1, \dots, q^E_{\deg(u)}, q^B_1, \dots, q^B_{\deg(u)})$, a valid solution for $\Pi$ for the current virtual node and its incident edges and elements of $B$, where the indices indicate the corresponding port for the respective edge or element of $B$.
\end{itemize}
Now, each node $v$ in a valid gadget transforms the output $q$ of the virtual node $u$ corresponding to the gadget into the desired tuple $(o^V, o^E_1, \dots, o^E_{\Delta}, o^B_1, \dots, o^B_{\Delta})$ as follows.
Recall the function $\alpha$ defined in Constraint~\ref{item:nodelist}, and set \[\bigl(o^V, o^E_{\alpha(1)}, \dots, o^E_{\alpha(\deg(u))}, o^B_{\alpha(1)}, \dots, o^B_{\alpha(\deg(u))}\bigr) = \bigl(q^V, q^E_1, \dots, q^E_{\deg(u)}, q^B_1, \dots, q^B_{\deg(u)}\bigr).\]
Note that, by construction $\deg(u) = |S|$.
Now it is straightforward (if somewhat cumbersome) to check that this completion of the output of $v$ satisfies the last bullet of Constraint~\ref{item:nodelist} and the first of Constraint~\ref{item:edgelist}.
The remaining second bullet of Constraint~\ref{item:edgelist} follows from the fact that the computed outputs $q$ form a valid solution to \nelcl{} $\Pi$.

We need to show that, given their radius-$O(\tdet(\Pi,n) \cdot d(n))$ neighborhood, resp.\ radius-$O(\trand(\Pi,n) \cdot d(n))$ neighborhood in the randomized case, nodes can actually find a valid solution for the original problem $\Pi$. To this end, we want to show that after collecting this neighborhood nodes can see up to a radius of at least $\tdet(\Pi,n)$, resp.\ $\trand(\Pi,n)$, in the virtual graph, and that any virtual graph has size at most $n$. This follows from the following observations:
\begin{itemize}
	\item In the worst case a gadget has diameter $O(d(n))$, where the worst case occurs if there is a single gadget containing all the nodes of the graph.
	\item In the worst case for the size of the virtual graph each gadget has just constant size. Even in this case the virtual graph has at most $n$ nodes.
\end{itemize}
Hence, each node can simulate a $\tdet(\Pi,n)$-round, resp. $\trand(\Pi,n)$-round, algorithm for $\Pi$ (whose existence is guaranteed by $\Pi$'s time complexity) on the virtual graph and thus find a valid solution for $\Pi$, as required.
Note that this simulated algorithm assumes that the input graph for $\Pi$ (i.e., the \emph{virtual} graph) has size $n$, which is an assumption that is consistent with the view of each node since we allow disconnected graphs (which is important in case a node sees the whole virtual graph, which is then interpreted as a connected component of an $n$-node graph).
It follows that, in the randomized case, the failure probability of our obtained algorithm for $\Pi'$ is upper bounded by the failure probability of the used algorithm for $\Pi$ since the algorithm for $\Pi'$ only fails if the algorithm for $\Pi$ would fail on an $n$-node graph that contains the virtual graph as a connected component.
In particular, the obtained randomized algorithm gives a correct output w.h.p.

Since the gathering process dominates the execution time, the time complexity of the obtained algorithm is $O(\tdet(\Pi,n) \cdot d(n))$ in the deterministic case, and $O(\trand(\Pi,n) \cdot d(n))$ in the randomized case.
Note that this algorithm works also on graphs containing self-loops and parallel edges.
\end{proof}

\subsubsection{Lower bound}
\begin{lemma}\label{lem:newpi_lb}
	Let $f\colon \NN \to \NN$ be a function such that, for each $x \in \NN$, we have $f(x) \leq x$  and there exists some $y \in \NN$ with $f(y) = x$.
	Solving problem $\Pi'$ requires $\Omega(\tdet(\Pi,f(n)) \cdot d(\frac{n}{f(n)}))$ rounds deterministically and $\Omega(\trand(\Pi,f(n)) \cdot d(\frac{n}{f(n)}))$ rounds randomized.
\end{lemma}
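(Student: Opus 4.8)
The plan is to reduce $\Pi$ to $\Pi'$ by padding. Fix an arbitrary (deterministic or randomized) algorithm $\AAA'$ for $\Pi'$, let $T'(\cdot)$ be its round complexity, and extract from it an algorithm $\AAA$ for $\Pi$ that is faster by a factor $\Theta(d(n/f(n)))$; since $\AAA'$ was arbitrary this forces $T'$ to be large. Given $n$, set $m := f(n) \le n$ and $k := \lceil n/m\rceil$, and let $G$ be an arbitrary $m$-node input for $\Pi$ (maximum degree $\Delta$, ports as usual). Build a padded graph $G' \in \gadget(G)$ by replacing each node $v$ of $G$, of degree $d_v$, with a fresh copy $C_v$ of the representative gadget $\gadrepr_k$; for each edge of $G$ at port $i$ of $v$ and port $j$ of $w$ add the $\portedge$ between $\port_i(C_v)$ and $\port_j(C_w)$, and label the gadget-internal edges $\gadedge$. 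Carry the $\Pi$-data of $G$ into the $\Pi$-part of the inputs of $G'$ (the node input of $v$ onto $\port_1(C_v)$, edge and half-edge inputs onto the corresponding $\portedge$s and their half-edges), use the built-in labelling of $\gadrepr_k$ for the $\Psi_{\gadget}$-part, and assign identifiers to the nodes of $C_v$ deterministically from $\operatorname{id}(v)$. Since $|V(\gadrepr_k)|$ is a fixed number $= \Theta(k)$, the graph $G'$ has $N := m\cdot|V(\gadrepr_k)| = \Theta(n)$ nodes (with $N \ge m$), maximum degree $\Delta$, and the structure and labels around any $C_v$ are completely determined by the radius-$O(1)$ neighbourhood of $v$ in $G$.

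Next I would show that any correct solution of $\Pi'$ on $G'$ encodes a correct solution of $\Pi$ on $G$ recoverable locally. Every $C_v$ is a copy of $\gadrepr_k \in \gadget$, hence a valid gadget, so Constraint~\ref{item:check} together with the defining property of $\Psi_{\gadget}$ forces every $\Sigma^{\gadget}_{*,\out}$-output inside $C_v$ to be $\lgadok$; consequently the ``error branches'' of Constraints~\ref{item:nodelist} and~\ref{item:edgelist} are switched off and their genuine conditions apply, and likewise Constraint~\ref{item:niceedge}. The $\gadedge$-clause of Constraint~\ref{item:edgelist} then forces all nodes of $C_v$ to carry one and the same list label $\ell^v_{\llist}$; Constraints~\ref{item:error},~\ref{item:niceedge}, and~\ref{item:nodelist} force the port set to be exactly $S = \{\port_1,\dots,\port_{d_v}\}$ (the unused ports $\port_{d_v+1},\dots,\port_\Delta$ of $\gadrepr_k$ have no incident $\portedge$, are labelled $\lporterr_2$, and are therefore excluded from $S$), so the map $\alpha$ of Constraint~\ref{item:nodelist} is the identity on $\{1,\dots,d_v\}$. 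The last bullet of Constraint~\ref{item:nodelist} then says that the $\Pi$-output part of $\ell^v_{\llist}$ is a correct node configuration of $\Pi$ with inputs matching those of $v$ in $G$, and the $\portedge$-clause of Constraint~\ref{item:edgelist} says that across the $\portedge$ corresponding to an edge $\{v,w\}$ of $G$ the two list labels agree on that edge and form a correct edge configuration of $\Pi$. Hence reading $(o^V, o^E_i, o^B_i)_i$ off the list label of $C_v$ at each $v$ gives a globally correct solution of $\Pi$ on $G$, and adjacent nodes agree on shared edge outputs.

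The key quantitative step is that padding stretches distances. I would prove that any walk in $G'$ starting in $C_v$ and reaching a node of $C_w$ with $\operatorname{dist}_G(v,w)=t\ge 1$ has length at least $(t-1)\cdot c_1\, d(k)$, where $c_1>0$ is the constant hidden in ``the pairwise distances between the ports of $\gadrepr_k$ are $\Theta(d(k))$'': such a walk crosses at least $t-1$ intermediate gadgets, and since each port node has exactly one incident $\portedge$, in each such gadget it enters and leaves through two \emph{distinct} ports, at distance $\ge c_1 d(k)$ inside $\gadrepr_k$. Hence the radius-$T'(N)$ ball of $\port_1(C_v)$ in $G'$ lies inside $\bigcup\{C_w : \operatorname{dist}_G(v,w)\le t_{\max}\}$ with $t_{\max}=O(T'(N)/d(k))+O(1)$. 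This yields $\AAA$: node $v$ gathers its radius-$t_{\max}$ neighbourhood of $G$ (identifiers, degrees, $\Pi$-inputs), locally reconstructs the corresponding portion of $G'$ (fully determined, as noted), simulates $\AAA'$ at $\port_1(C_v)$ with the appropriate size parameter $N$, and outputs the $\Pi$-configuration encoded in the resulting list label. By the previous paragraph this is correct; in the randomized case the failure probability is at most that of $\AAA'$ on $G'$, i.e.\ at most $1/N \le 1/m$, so $\AAA$ succeeds with high probability on $m$-node graphs. Since $G$ was an arbitrary $m$-node input, $\tdet(\Pi,m)\le t_{\max}=O(T'_{\dett}(N)/d(k))+O(1)$ and $\trand(\Pi,m)\le O(T'_{\rand}(N)/d(k))+O(1)$; rearranging and using $N=\Theta(n)$, $m=f(n)$, $k=\Theta(n/f(n))$ gives the claimed bounds $\Omega(\tdet(\Pi,f(n))\cdot d(n/f(n)))$ and $\Omega(\trand(\Pi,f(n))\cdot d(n/f(n)))$ for $\Pi'$. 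The surjectivity of $f$ is what guarantees that every size $m$ for $\Pi$ is of the form $f(n)$, so that the $\Pi$-lower bounds at all sizes actually feed through.

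I expect the main obstacle to be the distance argument in both directions at once: we need not only that gadgets have diameter $O(d(k))$ (bounding how far $\AAA$ must look) but also that even the cheapest route between two ports of $\gadrepr_k$ costs $\Omega(d(k))$ (so that $T'$ rounds in $G'$ expose only $O(T'/d(k))$ hops of $G$), while simultaneously arguing that this restricted view still pins down $\AAA'$'s entire output on a gadget. The constraint bookkeeping in the second paragraph is routine but lengthy, and the remaining technicalities---matching $N=\Theta(n)$ to the size parameter (using that $|V(\gadrepr_k)|$ is fixed for fixed $k$ and standard monotonicity of \lcl{} complexities), transferring the high-probability guarantee, and absorbing the additive $O(1)$ (harmless once $\tdet(\Pi,m)\ge 1$, and in any case dominated because even reconstructing $S$ and the recorded port inputs already forces $\Omega(d(k))$ rounds)---are things one must handle with care but that present no conceptual difficulty.
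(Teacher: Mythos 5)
Your proposal takes the same route as the paper: pad a hard instance of $\Pi$ with the representative gadget, show that a $\Pi'$-solution on the padded graph decodes to a $\Pi$-solution on the original graph, and use the distance-stretching property of the gadget to transfer a fast $\Pi'$-algorithm into a too-fast $\Pi$-algorithm. The paper states this as a contradiction argument while you phrase it as a direct reduction, and you elaborate two steps the paper asserts tersely: (i) the constraint bookkeeping showing that on an all-valid padded graph Constraints~\ref{item:error}--\ref{item:edgelist} force the list labels to encode a correct $\Pi$-solution, and (ii) the walk-crossing argument showing that a radius-$T$ ball in the padded graph exposes only $O(T/d(k))$ hops of the underlying graph. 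Both elaborations are correct and in the spirit of the paper, though for~(ii) the argument is most cleanly stated for simple/shortest paths (where consecutive port edges must hit distinct ports, because each port has exactly one incident $\portedge$), and then extended to arbitrary walks by monotonicity.

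The one place where the paper is more careful than you is the choice of gadget size. You set $k=\lceil n/f(n)\rceil$ and use $\gadrepr_k$, which has $\Theta(k)$ nodes but not exactly $k$; so your padded graph has $N=\Theta(n)$ nodes, possibly $N>n$, and your final inequality is about $T'(\Theta(n))$ rather than $T'(n)$. You flag this and propose to absorb it by ``standard monotonicity,'' but the cleaner fix is exactly what the paper does: choose $N$ to be the \emph{largest} gadget size $\le n/f(n)$ for which a gadget with the right port-distance property exists, pad each node of the $f(n)$-node hard instance with that gadget (giving $N\cdot f(n)\le n$ nodes), and then add $n-N\cdot f(n)$ isolated nodes so the final instance has exactly $n$ nodes. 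This way the assumed $o(\cdot)$-round algorithm for $\Pi'$ is invoked at size parameter exactly $n$, and the constant in $N=\Theta(n/f(n))$ still gives port distances $\Theta(d(n/f(n)))$, avoiding any appeal to regularity of $T'$, $f$, or $d$. With that adjustment your argument matches the paper's.
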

\begin{proof}
We start by proving the randomized lower bound.
For a contradiction, assume that there is a $o(\trand(\Pi,f(n)) \cdot d(\frac{n}{f(n)}))$-round randomized algorithm $\AAA$ that solves $\Pi'$ w.h.p.
Recall the $(d, \Delta)$-gadget family $\gadget$ used to define $\Pi'$.
Let $N$ be the largest integer with $N \leq n/f(n)$ such that there exists a gadget $G_N \in \gadget$ with $N$ nodes such that the pairwise distances between the ports of $G_N$ are all in $\Theta(d(N))$.
Let $H$ be an arbitrary graph with $f(n)$ nodes, and consider the padded graph $H' \in \gadget(H)$ obtained by choosing gadget $G_N$ for each node of $H$.
Let $H''$ be the $n$-node graph obtained by adding $n - N \cdot f(n)$ isolated nodes to $H'$.

Consider what happens if the nodes in $H$ simulate $\AAA$ on $H''$.
Since each node of $H$ has been expanded into a valid gadget, a valid solution for $\Pi'$ found by $\AAA$ on the subgraph $H'$ of $H''$ yields a valid solution for $\Pi$ on $H$, by the definition of problem $\Pi'$.
Hence, due to the properties of our function $f$, we have transformed $\AAA$ into an algorithm $\AAA'$ for $\Pi'$, and the failure probability of $\AAA'$ on graphs of size $f(n) \leq n$ is upper bounded by the failure probability of $\AAA$ on graphs of size $n$.
It follows that $\AAA'$ is correct w.h.p.
Moreover, in order to simulate $\AAA$, it is sufficient if each node of $H$ collects its radius-$o(\trand(\Pi,f(n)) \cdot d(\frac{n}{f(n)}))/\Theta(d(N))$, due to the runtime of $\AAA$ and the definition of $G_N$.
By Definition~\ref{def:gadget} and the definition of $N$, we have $N = \Theta(n/f(n))$; therefore, the runtime of $\AAA$ is $o(\trand(\Pi,f(n)))$.
This yields a contradiction to the definition of $\trand(\Pi,f(n))$ and proves the randomized lower bound.
The deterministic lower bound is proved analogously, the only difference being that we do not have to worry about failure probabilities.
\end{proof}
Theorem~\ref{thm:pi_to_newpi} now follows from Lemmas~\ref{lem:newpi_ub} and \ref{lem:newpi_lb}.

\section{\boldmath A \texorpdfstring{$(\log,\Delta)$}{(log, \textDelta)}-gadget family}\label{sec:thegadget}
In this section we present a $(\log,\Delta)$-gadget family, and prove that it satisfies the properties described in Definition \ref{def:gadget}. Hence, we will prove the following theorem.
\begin{theorem}\label{thm:gadget-family}
	There exists a $(\log ,\Delta)$-gadget family.
\end{theorem}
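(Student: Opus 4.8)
The plan is to exhibit an explicit family together with its checker. For the gadget I would use a balanced binary-tree-like graph: a center node of degree $\Delta$, each of whose $\Delta$ neighbors is the root of a complete binary tree of depth $\approx k$, with one designated leaf in each of the $\Delta$ subtrees labeled $\port_i$ (for $i=1,\dots,\Delta$) and every other node labeled $\noport$. Such a gadget has $\Theta(2^k)$ nodes, diameter $\Theta(k)$, and pairwise port distances $\Theta(k)$; hence letting $\gadget$ contain one such gadget for each $k$ gives, for every $n$, a gadget with $\Theta(n)$ nodes all of whose port distances are $\Theta(\log n)=\Theta(d(n))$, and every $m$-node member is an $(m,O(\log m))_\Delta$-gadget. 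The first real work is to fix a labeling making each such gadget \emph{self-describing}: node-type labels (center / subtree-root / internal / leaf, together with the port index where applicable) and parent / left-child / right-child information recorded on the incident node--edge pairs so that adjacent nodes' pointers are forced to agree, plus auxiliary certificates (on nodes, edges, and node--edge pairs) whose sole purpose is to make the tree's \emph{balance}, and hence its $O(\log n)$ diameter, locally verifiable --- the natural attempt being to link the nodes of each level and have them certify a common type, or to carry per-subtree ``height'' information. Getting this to work with only constantly many bits per element, within the node-edge-checkable format, is the crux; see the last paragraph.

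Given such a labeling I would define the \nelcl{} $\Psi_{\gadget}$ with output set $\{\lgadok\} \cup L_{\lerr}$. The $\lgadok$-branch of its node and edge constraints simply asserts that all local consistency checks of the self-describing labeling hold at the element in question (correct degrees, consistent parent/child pointers, consistent type/level certificates, consistent port-count and port-index certificates). The $L_{\lerr}$-branch encodes a locally checkable proof of error: each $L_{\lerr}$-labeled node carries a pointer toward a ``defect'', i.e., toward a node at which some local consistency check fails; the constraints force these pointers to form admissible acyclic chains that terminate at a defect, and in particular rule out the degenerate patterns (such as two error-labeled nodes pointing at one another, or chains that close up without meeting a defect) that could otherwise survive on a gadget. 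Two facts then give exactly the properties required by Definition~\ref{def:gadget}. If $H \in \gadget$, then $H$ is locally consistent, so the all-$\lgadok$ labeling is a solution, and since $H$ has no defect, no $L_{\lerr}$-using solution exists --- hence all-$\lgadok$ is the \emph{unique} solution. If $H \notin \gadget$, then (by the structural lemma below) $H$ contains a node at which a local check fails, and having every node point along a BFS shortest path to the nearest such node yields a correct solution using only $L_{\lerr}$.

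For the checker $\gadcheck$, given an upper bound $n$ on the number $N$ of nodes of $H$, every node gathers its radius-$(c\log n)$ neighborhood for a suitable constant $c$. A valid $N$-node gadget has depth at most $\log_2 n + O(1)$ and hence diameter at most $2\log_2 n + O(1) < c\log n$, so if $H \in \gadget$ every node sees all of $H$, verifies that it is a valid gadget, and outputs $\lgadok$. If $H \notin \gadget$, a quantitative form of the structural lemma says that, in a graph with at most $n$ nodes that is not a gadget, every node has a failed local check within distance $c\log n$ --- a defect-free ball of that radius, being ``rigid'', would be forced to close up into a genuine $\le n$-node gadget containing the center --- so every node can locate a nearby defect and emit the corresponding $L_{\lerr}$ pointer-label consistently with its neighbors, which is a correct $\Psi_{\gadget}$ solution using only $L_{\lerr}$. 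The running time is $O(\log n) = O(d(n))$, as required. Together with the first paragraph this establishes all the conditions of Definition~\ref{def:gadget} for $d = \log$, proving Theorem~\ref{thm:gadget-family}.

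The step I expect to be the main obstacle is the structural lemma: designing the constant-bit labeling and the node/edge constraints so that \emph{local consistency is equivalent to being a genuine balanced gadget} (equivalently, so that every non-gadget has a failed local check within $O(\log n)$ of each of its nodes). Node-edge-checkable constraints are weak --- a constraint cannot inspect neighbors of neighbors --- so they cannot directly certify balance or ``small diameter'', and a naive labeling is happily satisfied by pathological structures that are not gadgets: unbalanced caterpillar-like trees, or cyclic ``necklaces'' of internal nodes carrying pendant subtrees. The labeling must therefore be enriched with exactly the right auxiliary structure so that balance becomes locally certifiable while degrees and label sizes stay constant, and the proof-of-error format must be chosen in tandem so that it is available on every non-gadget but on no gadget (which in turn constrains how much cyclic structure the gadget may contain). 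Once the labeling is pinned down, the remaining ingredients --- the size / diameter / port-distance arithmetic, the simulation argument inside $\gadcheck$, and the bookkeeping showing that $\Psi_{\gadget}$'s constraints really are \nelcl{} constraints --- are routine.
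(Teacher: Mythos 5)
Your high-level architecture matches the paper's: a center node of degree $\Delta$ attached to $\Delta$ binary-tree-like sub-gadgets, one designated port per subtree, a self-describing input labeling, and an error-pointer output format whose chains must terminate at a genuine defect, together with a ``see the whole gadget or see a defect within $O(\log n)$'' argument for $\gadcheck$. But you correctly flag the decisive technical step --- making the tree's logarithmic depth locally certifiable with constantly many bits per node/edge --- and then you leave it unsolved. That step is not routine, and without it the theorem is not proved: a plain binary tree with parent/left/right pointers and type labels admits locally consistent labelings on unboundedly unbalanced trees, so the diameter bound (and hence $\Theta(\log n)$ port distances and the ``defect within $O(\log n)$'' claim) collapses. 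The alternative you float, carrying per-subtree height, does not stay within constant label size since the height is $\Theta(\log n)$.

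What the paper actually does is commit to a different gadget graph, not just a different labeling: each sub-gadget is a complete binary tree \emph{plus horizontal edges} that join the nodes of each level into a path, with per-half-edge labels $\parent,\lleft,\lright,\lchild,\rchild$. The local constraints then enforce that the left/right boundaries of each level line up with those of the level above (constraints~\ref{cons:right boudary}--\ref{cons:root}), that the $4$-cycles $u(\lright,\lchild,\lleft,\parent)=u$ and the triangles $u(\lchild,\lright,\parent)=u$ close (constraints~\ref{cons:triangle}--\ref{cons:cycles}), and that the bottom level is horizontally uniform (constraint~\ref{cons:bottom boundary}). This rigid ``ladder'' is what forces level $k$ to have exactly $2^k$ nodes and hence forces depth $\Theta(\log n)$; the proof of Lemma~\ref{lem:SG_correctness} crucially uses the horizontal edges to rule out the wrap-around pathologies you mention. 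The error-pointer constraints are also tied to this extra structure ($\lright$, $\lleft$, $\parent$, $\rchild$, $\lup$, $\ldown_i$ pointers that are monotone in a specific sense on the ladder), which is what makes the ``no error proof exists on a valid gadget'' argument go through. Finally, even the node-edge-checkability of the resulting constraints is nontrivial: checking $u(\lright,\lchild,\lleft,\parent)=u$ or ruling out self-loops/parallel edges is a radius-$2$ check, and the paper needs additional machinery (a distance-$2$ coloring supplied as input, and colored constant-length witness chains as output) to flatten these into genuine node/edge constraints. So the gap in your proposal is precisely the structural lemma you defer to the last paragraph; the resolution is the horizontal-edge construction plus its tailored constraint set, and it is the core content of the theorem rather than a detail.
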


Informally, each gadget in the $(\log,\Delta)$-gadget family is composed by $\Delta$ sub-gadgets. Each sub-gadget is a complete binary tree where we add horizontal edges, creating a path that traverses nodes of the same level. The bottom right node of each sub-gadget is a \emph{port} (see Figure \ref{fig:subgadget}). Then, we add a node, that we call \emph{center}, and connect it to the root of each sub-gadget  (see Figure \ref{fig:gadget}). Also, we add constant-size input labels to the gadget to make its structure locally checkable.

As stated in Definition \ref{def:gadget}, $\Psi_\gadget$ must be a \nelcl. For the sake of readability, we will define $\Psi$ as a constant radius checkable \lcl. Then, we will show how to modify it and obtain a \nelcl{} $\Psi_\gadget$.

\subsection{Sub-gadget}
For any parameter $h$, it is possible to construct sub-gadgets of height $h$. Let $(\ell_u,x_u)$ be the \emph{coordinates} of a node $u$ of the sub-gadget. For any node $u$, it holds $0\le\ell_u < h$ and $0\le x_u < 2^{\ell_u}$. Let $u$ and $v$ be two nodes with coordinates $(\ell_u,x_u)$ and $(\ell_v,x_v)$ respectively, such that $\ell_v \le \ell_u$ and $x_v \le x_u$.  There is an edge between $u$ and $v$ if and only if:
\begin{itemize}[noitemsep]
	\item $(\ell_v,x_v)=(\ell_u-1, \lfloor \frac{x_u}{2} \rfloor)$, or
	\item $(\ell_u, x_u)=(\ell_v, x_v+1)$.
\end{itemize}

\paragraph{Sub-gadget labels.}
We make a sub-gadget locally checkable by adding constant-size labels in the following way. First of all, each node $u=(\ell_u,x_u)$ has labels:
\begin{itemize}[noitemsep]
	\item $\lindex_i$, where $1\le i \le \Delta$;
	\item $\port_i$,  where $1\le i \le\Delta$, if $\ell_u=h-1$ and $x_u=2^{\ell_u}-1$.
\end{itemize} 
Moreover, each edge $e=\{ u,v\}$ has a label on both endpoints, $L_u(e)$ and $L_v(e)$. Each label $L_u(e)$ is chosen as follows:
\begin{itemize}[noitemsep]
	\item $L_u(e)=\parent$ if $(\ell_v,x_v)=(\ell_u-1, \lfloor \frac{x_u}{2} \rfloor)$;
	\item $L_u(e)=\lright$ if $(\ell_v, x_v)=(\ell_u, x_u+1)$;
	\item $L_u(e)=\lleft$ if $(\ell_v, x_v)=(\ell_u, x_u-1)$;
	\item $L_u(e)=\lchild$ if $(\ell_v, x_v)=(\ell_u + 1,2x_u)$;
	\item $L_u(e)=\rchild$ if $(\ell_v,x_v)=(\ell_u + 1, 2x_u+1)$.
\end{itemize}
See Figure \ref{fig:subgadget} for an example of a sub-gadget.

\begin{figure}[t]
	\centering
	\includegraphics[scale=0.7]{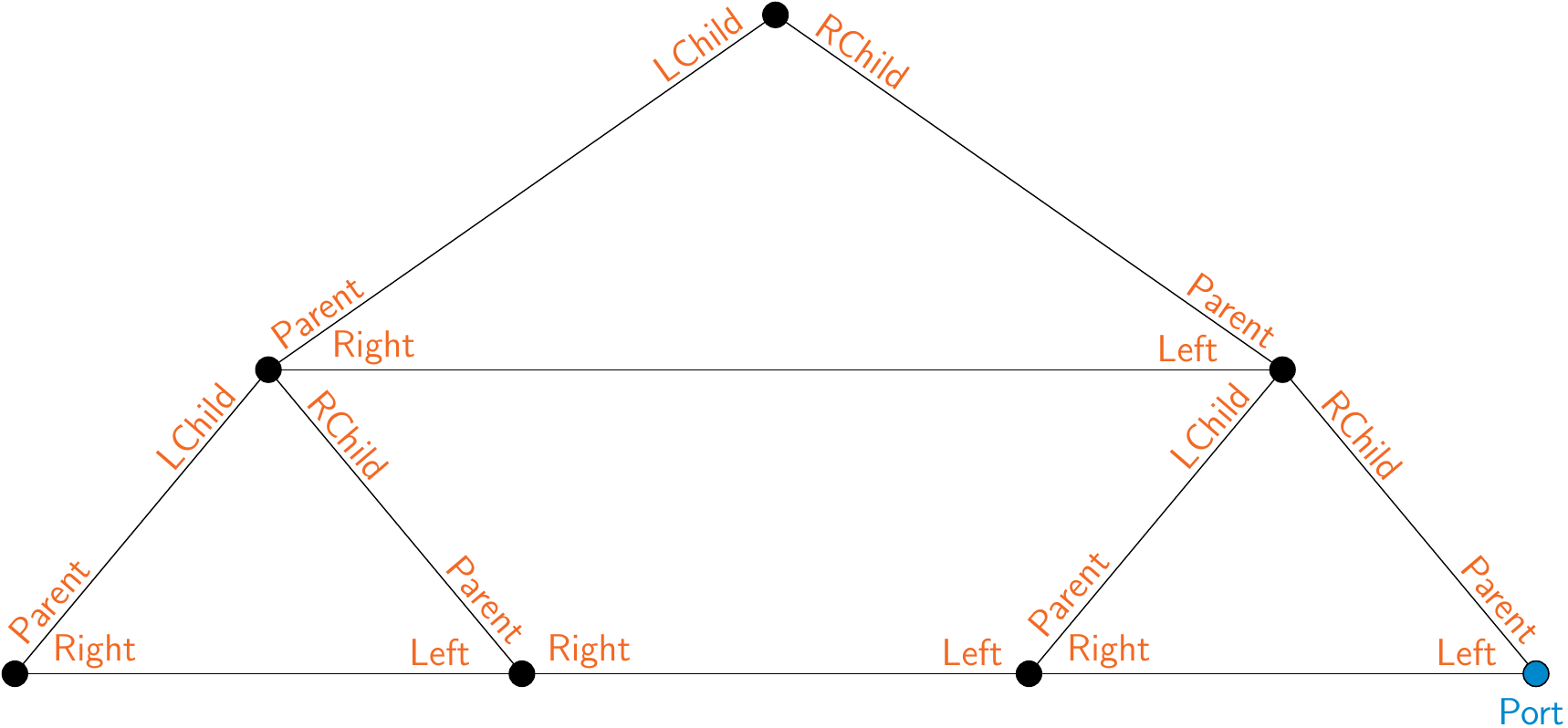}
	\caption{An example of a sub-gadget and its input labeling.}\label{fig:subgadget}
\end{figure}

\begin{figure}[t]
    \centering
    \includegraphics[scale=0.6]{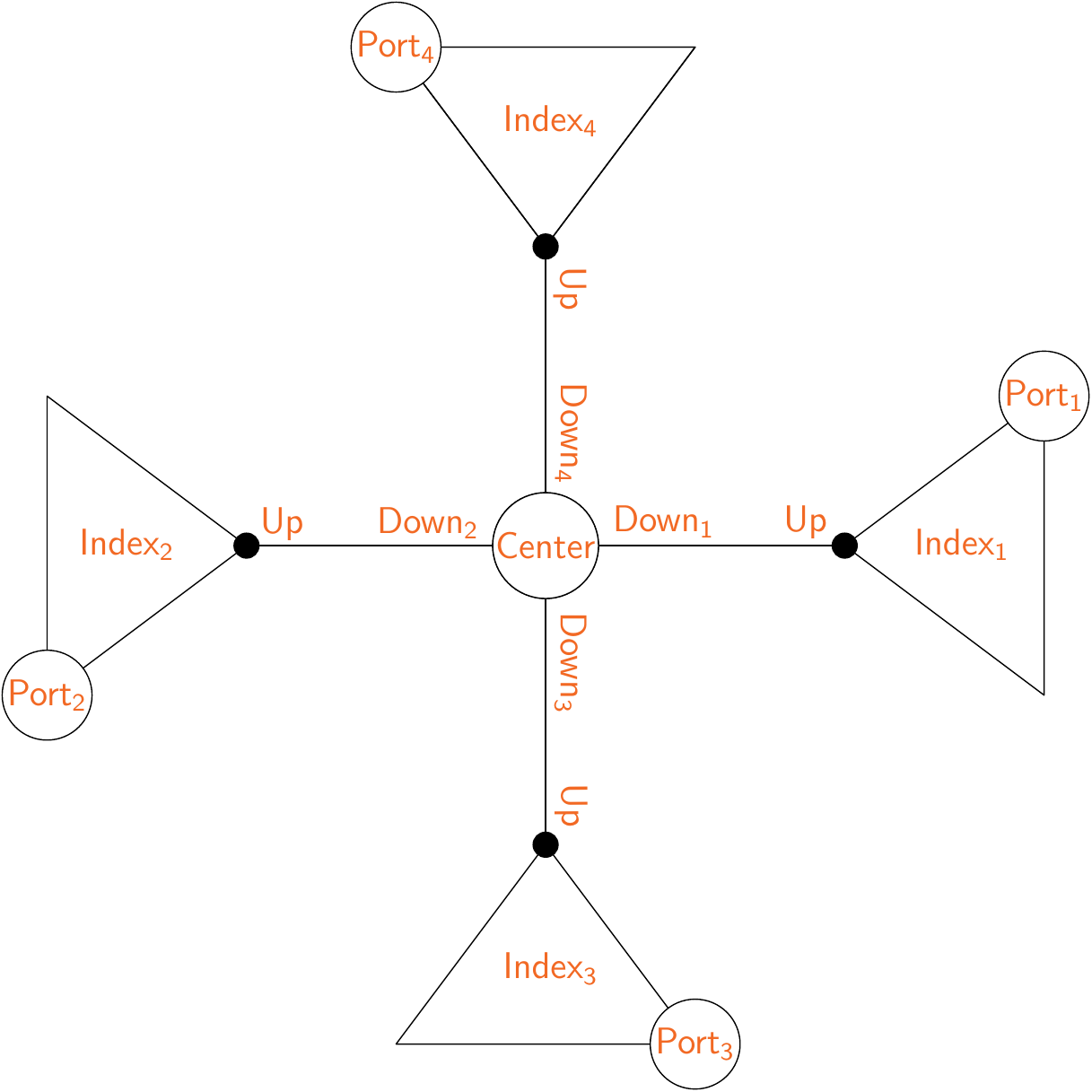}
    \caption{An example of a gadget and its input labeling.}\label{fig:gadget}
\end{figure}

\subsection{Local checkability of a sub-gadget}\label{subsec:SG_checkability}
Let $L_1, L_2, \dots, L_k$ be $k$ labels. We denote by $u(L_1,L_2,\dots,L_k)$ the node reached from $u$ by following edges labeled $L_1,L_2,\dots,L_k$. Each node $u$ of the sub-gadget checks the following local constraints.
\begin{enumerate}
	\item Each node $u$ checks the following to guarantee some basic properties:
	\begin{enumerate}
		\item there are no self loops or parallel edges;\label{cons:no_selfloop}
		\item for any two incident edges $e,e'$, $L_u(e)\neq L_u(e')$;
		\item it must be labeled $\lindex_i$ for some $i$, and its neighbors must be labeled $\lindex_i$ as well;
		\item if $u$ is labeled $\port_i$ and $\lindex_j$, then $i=j$.\label{cons:correct_port_label}
	\end{enumerate}
	\item Each node $u$ checks the following to guarantee a correct internal structure of the sub-gadget: 
	\begin{enumerate}
		\item for each edge $e=\{u,v\}$, if $L_u(e)=\lleft$ then $L_v(e)=\lright$, and vice versa;\label{cons:left-right}
		\item for each edge $e=\{u,v\}$, if $L_u(e)=\parent$ then $L_v(e)=\rchild$ or $L_v(e)=\lchild$, and vice versa;
		\item $u(\lchild,\lright,\parent)=u$, if the path exists;\label{cons:triangle}
		\item $u(\lright,\lchild,\lleft,\parent)=u$, if the path exists.\label{cons:cycles}
	\end{enumerate}
	\item Each node $u$ checks the following to guarantee the correct boundaries of the sub-gadget:
	\begin{enumerate}
		\item $u$ does not have an incident edge labeled $\lright$ if and only if neither $u(\parent)$ does, if it exists;\label{cons:right boudary}
		\item $u$ does not have an incident edge labeled $\lleft$, if and only if neither $u(\parent)$ does, if it exists;\label{cons:left boudary}
		\item if $u$ does not have an incident edge with label $\lright$ and it has an incident edge $e=\{u,v\}$ labeled $L_e(u)=\parent$, then $L_e(v)=\rchild$;
		\item if $u$ does not have an incident edge labeled $\lleft$ and it has an incident edge $e=\{u,v\}$ labeled $L_e(u)=\parent$, then $L_e(v)=\lchild$;
		\item if $u$ does not have incident edges labeled $\lright$ and $\lleft$, then it is the root of the sub-gadget and it has only two incident edges with labels $\lchild$ and $\rchild$;\label{cons:root}
		\item $u$ has an incident edge labeled $\rchild$ if and only if it also has an incident edge labeled $\lchild$;\label{cons:2 children}
		\item if $u$ does not have incident edges with labels $L_u(e)=\lchild$ or $L_u(e)=\rchild$, then neither does $u(\lleft)$ and $u(\lright)$ (if they exist);\label{cons:bottom boundary}
		\item $u$ is labeled $\port_i$ if and only if it does not have incident edges labeled $\lright$, $\lchild$, and $\rchild$.
	\end{enumerate}
\end{enumerate}
If the above constraints are satisfied, we say that the sub-gadget has a valid structure.

\paragraph{Correctness.}
We want to show two things: a valid sub-gadget satisfies all the above constrains, and, any graph that satisfies the above constraints is a valid sub-gadget. It is clear that the first property holds. In order to prove the second property, we will proceed as follows. First we will show that a graph that satisfies the above constraints must have a node that does not contain incident edges labeled $\rchild$ or $\lchild$.  Then, assuming we have such a node in the graph, we prove that it is a valid sub-gadget. 

\begin{lemma}\label{lem:SG_correctness}
	Let $G$ be a graph with $n$ nodes that satisfy the local constraints of a sub-gadget, then $G$ is a valid sub-gadget.
\end{lemma}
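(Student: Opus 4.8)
The plan is to recover, from the purely local information certified by the constraints, the intended coordinates $(\ell_u,x_u)$ of every node of $G$, and then to check that the resulting map is a label-preserving isomorphism onto a standard height-$h$ sub-gadget for the $h$ determined by $n = 2^h-1$. Following the two-step outline already announced, I would first locate the bottom level and then rebuild the tree bottom-up, level by level.

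\emph{Step 1: a node with no child edge exists.} Suppose, for contradiction, that every node has an incident edge labeled $\lchild$ (equivalently, by the constraint tying $\lchild$ to $\rchild$, an incident $\rchild$ edge as well). By \ref{cons:no_selfloop} and the requirement that a node's incident edges carry pairwise distinct labels, each node has exactly one $\lchild$ edge, so starting from any node and repeatedly following $\lchild$ edges we obtain, by finiteness of $G$, a closed walk $u_0 \xrightarrow{\lchild} u_1 \xrightarrow{\lchild} \cdots \xrightarrow{\lchild} u_{k-1} \xrightarrow{\lchild} u_0$. The $\parent$/child duality then forces each $u_{i+1}$ to reach $u_i$ along a $\parent$ edge, so the $u_i$ also form a closed walk of $\parent$ edges. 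Now I would use the triangle constraint \ref{cons:triangle} and the square constraint \ref{cons:cycles} to track the horizontal ($\lleft$/$\lright$) path one level below this cycle: these constraints force that path to ``double'' the cycle, and more generally each successive level to contain twice as many nodes as the one above, which is incompatible with $G$ being finite (a consistent stack of $m$ cyclic levels would need the topmost to have $2^{-m}$ times as many nodes as the bottom). Hence there is a node $b$ with no incident $\lchild$ or $\rchild$ edge.

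\emph{Step 2: reconstruction from $b$.} Starting at $b$, the boundary constraint propagating ``no children'' along horizontal neighbours shows that the whole maximal $\lleft$/$\lright$ path $P$ through $b$ consists of childless nodes; its right endpoint has no $\lright$ edge and, by \ref{cons:correct_port_label} together with the port-vs.-boundary constraint, is exactly the unique $\port_i$ node, while its left endpoint has no $\lleft$ edge. Declare $P$ to be level $h-1$ and give its nodes coordinates $(h-1,0),(h-1,1),\dots$ from left to right. Then proceed upward: by $\parent$/child duality every non-root node has a parent; by the ``two children'' constraint every node with a child has both; by \ref{cons:triangle} the two children of a node are horizontal neighbours; by \ref{cons:cycles} the right child of a node and the left child of its right neighbour are horizontal neighbours; and by the left/right boundary constraints ``leftmost'' and ``rightmost'' are preserved between a node and its parent. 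Together these show that the parents of a level-$\ell$ path of length $2^{\,h-1-\ell}$ form a level-$(\ell-1)$ path of length $2^{\,h-\ell}\!/2$, consistently glued by $\lleft$/$\lright$ edges via \ref{cons:left-right}; the induction terminates at a node with no incident $\lleft$ or $\lright$ edge, which by \ref{cons:root} is the unique root, has exactly its two child edges, and has no parent. Assigning coordinates along the way yields a bijection between $V(G)$ and the node set of the height-$h$ sub-gadget; a final (routine) case check against the definitions of the edge labels $\parent,\lleft,\lright,\lchild,\rchild$ and of the node labels $\lindex_i,\port_i$, using the remaining node and boundary constraints, confirms that this bijection is a label-preserving isomorphism and that $G$ has no further nodes or edges. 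Hence $G$ is a valid sub-gadget.

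\emph{Main obstacle.} The hard part is Step 2's upward induction: one must argue that the reconstruction is well defined at every level (each claimed parent exists and is unique, horizontal paths at consecutive levels line up exactly, and no ``extra'' node or edge can hide anywhere), which requires threading together essentially all of the structural and boundary constraints and carefully ruling out the kind of cyclic level structure that Step 1 must also exclude. The coordinate bookkeeping itself is routine but lengthy; I would organise it as an induction on $h-1-\ell$ with the invariant ``levels $\ell,\dots,h-1$ already reconstructed form exactly the bottom $h-\ell$ levels of a sub-gadget''.
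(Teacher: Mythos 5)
Your overall plan---find a childless node, then reconstruct the sub-gadget from the bottom level upward---matches the paper's, but Step 1 is much heavier than needed and Step 2 has a substantive gap.

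For Step 1 the paper uses a one-line counting argument: if every node had both a $\lchild$ and a $\rchild$ edge, then every node would be the ``child''-labeled side of exactly two edges but the ``$\parent$''-labeled side of at most one; since each such edge pairs one parent-labeled endpoint with one child-labeled endpoint, this gives $2n \le n$, a contradiction. Your route via a closed $\lchild$-walk plus a ``levels double downward'' argument could probably be rescued, but as sketched the claimed contradiction ``incompatible with $G$ being finite'' does not follow from the closed walk of length $k$ alone; one would have to state and prove that circling the vertical cycle once multiplies the size of a fixed horizontal component by $2^k$ with $k \ge 1$, which is exactly the extra work the counting argument avoids.

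The substantive gap is in Step 2: you pass from the childless node $b$ directly to ``the whole maximal $\lleft/\lright$ path $P$ through $b$'' and immediately speak of ``its right endpoint'' and ``its left endpoint,'' but nothing you have shown rules out that $P$ is a horizontal \emph{cycle} (i.e., that $b$ has both a $\lleft$ and a $\lright$ edge and the bottom level wraps around). Excluding this is precisely what the paper does next: if the bottom level were a cycle, constraints \ref{cons:right boudary} and \ref{cons:left boudary} force every level above it to also be a horizontal cycle, of half the size, with the gluing coming from \ref{cons:triangle} and \ref{cons:cycles}, and halving repeatedly eventually produces a node with $u(\lright)=u$, contradicting \ref{cons:no_selfloop}. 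Your ``Main obstacle'' paragraph acknowledges that ``cyclic level structure'' must be ruled out but never actually does so; that exclusion is the crux of the lemma, not deferrable bookkeeping. Once it is in place, the upward coordinate induction you describe is indeed routine and tracks the remainder of the paper's argument.
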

\begin{proof}
	By constraints \ref{cons:no_selfloop}--\ref{cons:correct_port_label}, each node satisfies the basic properties of a valid sub-gadget, such as the consistency of the labels. Constraints \ref{cons:left-right}--\ref{cons:cycles} ensure that the internal structure of the graph looks like a valid sub-gadget. Assume, by contradiction, that all nodes in $G$ have an incident edge with label $\rchild$. By constraint \ref{cons:2 children}, all nodes have also an incident edge labeled $\lchild$.  In order for each node to have $2$ children, such that no node has two incident edges labeled $\parent$, we need to have $2n$ nodes in $G$, which is a contradiction. This means that there exists a node $u$ in $G$ that is not a parent. By constraint \ref{cons:bottom boundary}, we ensure that also nodes $u(\lleft)$ and $u(\lright)$, if they exist, do not have incident edges with labels $\rchild$ or $\lchild$, ensuring that $G$ has a bottom boundary, as desired.
	
	Suppose that all nodes $u$ that do not have incident edges labeled $\rchild$ or $\lchild$ have also an incident edge labeled $\lright$. Notice that $u(\lright)$ cannot end in a node that has  incident edges with labels $\rchild$ or $\lchild$, since it would contradict constraint \ref{cons:bottom boundary}. Also, by constraint \ref{cons:no_selfloop}, self loops are not allowed. Hence, by constraint \ref{cons:left-right} every node in the bottom boundary must have incident edges with labels $\lright$ and $\lleft$. This means that the bottom boundary wraps around horizontally, forming a cycle. By constraints \ref{cons:right boudary} and \ref{cons:left boudary}, the graph $G$ will continue to wrap around horizontally, and since the internal structure is valid, the size of these cycles must halve each time, reaching a node $u$ (the root) that satisfies $u(\lright)=u$, which contradicts constraint \ref{cons:no_selfloop}.
	
	Hence, among nodes that have no incident edges labeled with $\rchild$ or $\lchild$, there must exist a node $u$ such that it does not have an incident edge $\lright$. This implies that there must exist also a node $v$ that does not have an incident edge labeled $\lleft$. Constraints  \ref{cons:left-right}--\ref{cons:cycles} and constraints \ref{cons:right boudary}--\ref{cons:root} ensure that $G$ has left and right boundaries according to the ones of a valid sub-gadget. Putting all together, we conclude that $G$ has the structure of a valid sub-gadget.
\end{proof}

\subsection{Gadget}\label{subsec:G_checkability}
A gadget is composed by $\Delta$ sub-gadgets, and the root of each sub-gadget is connected to a central node, labeled $\lcenter$. Let $u$ be a central node, then each edge $e=\{u,v\}$ has the following labels:
\begin{itemize}[noitemsep]
	\item let $\lindex_i$ be the label of node $v$, then $L_u(e)=\ldown_i$;
	\item $L_v(e)=\lup$.
\end{itemize}
See Figure \ref{fig:gadget} for an example of a gadget.

\paragraph{Local checkability.} In addition to the constraints described for a sub-gadget, each node $u$ checks also the following local constraints:
\begin{enumerate}
	\item if $u$ does not have an incident edge labeled $\parent$, it checks that it has exactly one neighbor labeled $\lcenter$;\label{cons:existence of center}
	\item if $u$ is labeled with $\lcenter$, it checks that:
	\begin{enumerate}
		\item $u$ is connected to exactly $\Delta$ nodes (roots of sub-gadgets);\label{cons:delta SG}
		\item for any edge $\{u,v\}$, let $\lindex_i$ be the label of node $v$, then $L_u(e)=\ldown_i$;
		\item for any edge $\{u,v\}$, $L_v(e)=\lup$;
		\item let $v$ and $w$ be neighbors of $u$, if $v$ is labeled $\lindex_i$ and $w$ is labeled $\lindex_j$, then $i\neq j$. \label{cons:diff index}
	\end{enumerate}
\end{enumerate}
If the above constraints are satisfied, we say that a gadget is valid.

\paragraph{Correctness.} It is easy to see that a gadget satisfies all the above constrains. We want to show that any graph that satisfies the above constraints is a gadget (notice that we have already shown the local checkability of a sub-gadget and its correctness, so we will assume that we are dealing with valid sub-gadgets).

\begin{lemma}\label{lem:G_correctness}
	Let $G$ be a graph with $n$ nodes that satisfies the local constraints of a gadget, then $G$ is a valid gadget.
\end{lemma}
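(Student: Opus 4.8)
The plan is to mirror the structure of the proof of \lemmaref{lem:SG_correctness}, but one level up. By the remark in the statement we may already assume that each connected component of $G$ reachable without crossing a $\lcenter$-incident edge is a valid sub-gadget (this is exactly the content of \lemmaref{lem:SG_correctness}, applied after we have verified the sub-gadget constraints). So the real work is to argue that the $\lcenter$ node together with its $\Delta$ attached roots is forced to look like the intended gadget.

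First I would locate the center. Constraint~\ref{cons:existence of center} says that every node without an incident $\parent$ edge has exactly one neighbor labeled $\lcenter$; since each valid sub-gadget has a unique root (the node the sub-gadget correctness lemma produced, with no incident $\parent$ edge) and all non-root nodes do have a $\parent$ edge, every sub-gadget root is adjacent to exactly one $\lcenter$-labeled node. Conversely a $\lcenter$ node, by Constraint~\ref{cons:delta SG}, has exactly $\Delta$ neighbors, and by the $\ldown_i/\lup$ edge-label constraints together with Constraint~\ref{cons:diff index}, these $\Delta$ neighbors carry pairwise distinct labels $\lindex_1,\dots,\lindex_\Delta$. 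Each such neighbor $v$: I would argue $v$ must be a sub-gadget root. Indeed $v$ has an incident edge labeled $\lup$ (toward the center), which is not one of the sub-gadget edge labels; if $v$ also had a $\parent$ edge, the sub-gadget constraints on $v$ plus Constraint~\ref{cons:existence of center} applied elsewhere would be fine, but the cleaner route is: the $\lindex_i$-consistency constraint~(1c) forces the whole connected sub-gadget containing $v$ to share label $\lindex_i$, and a valid sub-gadget's root is its unique node without a $\parent$ edge, which by Constraint~\ref{cons:existence of center} is the unique node of that component adjacent to a center. So $v$ is the root, and its sub-gadget is one of the $\Delta$ required sub-gadgets, with the correct index $i$.

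Next I would show that the whole graph $G$ is connected and consists of exactly this one center plus exactly $\Delta$ sub-gadgets. Connectivity within the excerpt's setting is assumed for gadgets, so take any node $w\in G$; if $w$ has a $\parent$ edge, follow $\parent$ edges upward (the sub-gadget structure guarantees this terminates at the root), reaching a root, which is adjacent to a center; if $w$ has no $\parent$ edge then $w$ is itself a root (or the center). So every node lies in a sub-gadget hanging off a center, or is a center. If there were two distinct centers $u_1,u_2$, connectedness would force a path between them; this path must leave $u_1$ through one of its $\ldown_i$ edges into a sub-gadget root, traverse that sub-gadget, and—since a valid sub-gadget has no other edges leaving it except the single $\lup$ edge back to $u_1$—it cannot reach $u_2$. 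Hence there is exactly one center, it has exactly $\Delta$ neighbors which are the roots of $\Delta$ sub-gadgets with distinct indices $1,\dots,\Delta$, and $G$ is precisely the union of these. This matches \defref{def:gadget}: $n$ nodes (whatever $n$ is), exactly $\Delta$ port nodes (the bottom-right node of each sub-gadget, labeled $\port_i$ with matching index by Constraint~\ref{cons:correct_port_label}), and diameter $O(h)$ since any two nodes are connected through the center via two paths of length $O(h)$ inside valid height-$h$ sub-gadgets, where all sub-gadgets must have the same height because they are glued to the same center and the padded-graph / gadget-family bookkeeping is not needed here—only that the structure is the intended one.

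The main obstacle I expect is the subtle part already flagged above: ruling out "extra" attachments, i.e., showing a sub-gadget root really has no incident edges other than its internal sub-gadget edges plus one $\lup$ edge to a single center, and that a center has no incident edges other than its $\Delta$ $\ldown_i$ edges. The first follows from the basic sub-gadget constraints (no parallel edges, each node's incident edge labels distinct, and Constraint~\ref{cons:root} pinning down the root's incident labels to $\lchild,\rchild$) combined with Constraint~\ref{cons:existence of center} forcing exactly one center-neighbor; the second is Constraint~\ref{cons:delta SG} directly. Once these "no stray edges" facts are in hand, the rest is the routine connectivity/counting argument sketched above, and one concludes, exactly as in \lemmaref{lem:SG_correctness}, that $G$ has the structure of a valid gadget.
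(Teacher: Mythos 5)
Your proof is correct and mirrors the paper's argument: both reduce to \lemmaref{lem:SG_correctness} for the sub-gadgets, then use Constraints~\ref{cons:existence of center}--\ref{cons:diff index} together with the $\lindex$-consistency rule to glue the $\Delta$ sub-gadgets onto a single $\lcenter$ node. You actually fill in details the paper leaves implicit (uniqueness of the center, ruling out stray edges); the only slip is the parenthetical claim that all sub-gadgets must have the same height, which is false and, as you yourself note, not needed for the lemma.
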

\begin{proof}
	In Lemma \ref{lem:SG_correctness} we have shown the correctness of a single sub-gadget. We still need to show that there cannot be edges among different sub-gadgets. This is ensured by constraint \ref{cons:diff index} and the constraint that, for each  node, label $\lindex$ must be the same as the one of its neighbors. Also, constraint \ref{cons:existence of center} guarantees the existence of a node labeled $\lcenter$. By constraints \ref{cons:delta SG}--\ref{cons:diff index}, we have that the central node is correctly connected to the $\Delta$ sub-gadgets, ensuring that the graph has the structure of a valid gadget.
\end{proof}

\subsection{\boldmath \lcl{} problem \texorpdfstring{$\Psi$}{\textPsi}}\label{subsec:invalid_gadget}
We now define a constant radius checkable \lcl{} problem $\Psi$, where either:
\begin{itemize}[noitemsep]
	\item all nodes output $\lok$, or
	\item all nodes output a (possibly different) error label.
\end{itemize}
On one hand, if the structure of a gadget is invalid, nodes must be able to prove that there is an error.  On the other hand, if the structure of the gadget is valid, then nodes must not be able to claim that there is an error. Notice that we allow nodes to output $\lok$ even if the gadget is invalid. Moreover, if the gadget is invalid, we show that it is possible to prove that there is an error in $O(\log n)$ rounds. More precisely, the possible error output labels of the nodes are the following:
\begin{itemize}[noitemsep]
	\item an error label $\lerror$;
	\item an error pointer label in $\{\lright, \lleft, \parent, \rchild,\lup,\ldown_i\}$.
\end{itemize}
The error output labels must satisfy the following locally checkable constraints.
\begin{enumerate}
	\item A node outputs either $\lerror$ or exactly one error pointer.
	\item A node outputs $\lerror$ if and only if the input labeling does not satisfy the local constraints given in Sections \ref{subsec:SG_checkability} and \ref{subsec:G_checkability}.
	\item Let $u$ be a node that outputs an error pointer, then the following hold.
	\begin{enumerate}
		\item If the error pointer is $\lright$, then $u(\lright)$ outputs $\lerror$ or an error pointer $\lright$.\label{cons:error_right}
		\item If the error pointer is $\lleft$, then $u(\lleft)$ outputs $\lerror$ or an error pointer $\lleft$.\label{cons:error_left}
		\item If the error pointer is $\parent$, then $u(\parent)$ outputs $\lerror$ or an error pointer in $\{\parent,\lleft, \lright, \lup\}$.\label{cons:error_parent_pointer}
		\item If the error pointer is $\rchild$, then $u(\rchild)$ outputs $\lerror$ or an error pointer in $\{\rchild, \lright, \lleft\}$.\label{cons:error_Rchild}
		\item If the error pointer is $\lup$ and $u$ has label $\lindex_i$, then $u(\lup)$ outputs $\lerror$ or an error pointer $\ldown_j$, where $j\neq i$.\label{cons:error_down(j)}
		\item If the error pointer is $\ldown_i$, then $u(\ldown_i)$ outputs $\lerror$ or an error pointer $\rchild$.
	\end{enumerate}
\end{enumerate}

\begin{lemma}
	There does not exist an algorithm that, on a valid gadget, outputs error labels that satisfy the local constraints at all nodes.
\end{lemma}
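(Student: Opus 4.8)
The plan is a proof by contradiction. Suppose a valid gadget $G$ admits an error labeling satisfying all the local constraints of $\Psi$. By the ``only if'' direction of Constraint~2 of $\Psi$, no node of $G$ can output $\lerror$, because the input labeling of a valid gadget satisfies every structural constraint of Sections~\ref{subsec:SG_checkability} and~\ref{subsec:G_checkability} (this is exactly what ``valid'' means; cf.\ Lemmas~\ref{lem:SG_correctness} and \ref{lem:G_correctness}). Hence by Constraint~1 of $\Psi$ every node outputs exactly one error pointer from $\{\lright,\lleft,\parent,\rchild,\lup,\ldown_i\}$, and---treating a node whose pointer does not match any incident edge as violating the relevant case of Constraint~3 of $\Psi$---each pointer follows a genuine incident edge with the matching label. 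So the error pointers define a function $\phi\colon V\to V$; since $V$ is finite, its functional digraph contains a directed cycle $v_0\to v_1\to\dots\to v_{k-1}\to v_0$ (with $k\ge 2$, as self-loops are forbidden by Constraint~\ref{cons:no_selfloop}).

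The heart of the argument is to rule out every pointer label from such a cycle, in an order chosen so that each step may use the previous ones. Two facts are used repeatedly: no node on the cycle outputs $\lerror$; and in a valid (sub-)gadget the relevant coordinate changes strictly monotonically along edges---following $\lright$ increases the horizontal coordinate $x$, following $\rchild$ increases the level, and following $\parent$ decreases the level---so a cycle whose pointers are all equal to one of these three labels is impossible. First, a $\lright$-node's successor must again be a $\lright$-node (Constraint~\ref{cons:error_right}), so any cycle containing $\lright$ is all-$\lright$, contradicting monotonicity of $x$; symmetrically for $\lleft$ (Constraint~\ref{cons:error_left}). Next, a $\rchild$-node's successor has a pointer in $\{\rchild,\lright,\lleft\}$ (Constraint~\ref{cons:error_Rchild}), and the latter two are now excluded from cycles, so any cycle with $\rchild$ is all-$\rchild$, again impossible by monotonicity of the level. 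Then a $\ldown_i$-node (necessarily the center) has a successor, a sub-gadget root, with pointer $\rchild$ (the $\ldown_i$ case of Constraint~3 of $\Psi$), which is excluded; and a $\lup$-node has a successor, the center, with a pointer $\ldown_j$ (Constraint~\ref{cons:error_down(j)}), which is excluded. Finally, a $\parent$-node's successor has a pointer in $\{\parent,\lleft,\lright,\lup\}$ (Constraint~\ref{cons:error_parent_pointer}), and all but $\parent$ are now excluded, so any cycle with $\parent$ is all-$\parent$---impossible by monotonicity of the level. Every pointer label is thus excluded, yet the cycle must carry one, a contradiction, completing the proof.

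I expect the main obstacle to be bookkeeping rather than any single hard idea: one has to pin down the semantics of a node ``pointing along a nonexistent edge'' so that $\phi$ is well defined, and be careful that the exclusion steps are carried out in a valid order (ruling out $\lup$ needs $\ldown_i$ already removed, which needs $\rchild$, which in turn needs $\lright$ and $\lleft$). A secondary point is that the coordinate-monotonicity facts are available only because $G$ genuinely has the canonical (sub-)gadget structure, which is precisely the content of Lemmas~\ref{lem:SG_correctness} and \ref{lem:G_correctness}.
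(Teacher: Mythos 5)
Your proof is correct, and it takes a genuinely different route from the paper's. The paper argues by following error-pointer chains and doing a case split: either all chains funnel into the center (which then has no legal successor) or the center outputs some $\ldown_i$, after which one tracks the resulting chain into sub-gadget~$i$ and shows it must run off a boundary (a node whose pointer names a nonexistent incident edge). Your argument instead notices that, once $\lerror$ is ruled out, the pointers define a \emph{total} map $\phi\colon V\to V$ whose functional digraph on a finite vertex set must contain a directed cycle, and you then exclude every pointer label from appearing on any cycle via a carefully ordered sequence of reductions ($\lright,\lleft \Rightarrow \rchild \Rightarrow \ldown_i \Rightarrow \lup \Rightarrow \parent$), each reduction shrinking the label set a cycle can carry until a forced monochromatic cycle contradicts strict monotonicity of a coordinate. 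Both proofs rest on the same facts about the canonical sub-gadget structure (each pointer type moves a coordinate strictly in one direction, and pointers off nonexistent edges violate Constraint~3), and both exploit the asymmetry between $\ldown_i$ and $\lup$ to break the potential center--root back-and-forth. The cycle-based framing replaces the paper's boundary-termination reasoning with a self-contained pigeonhole observation and collapses the two-case structure into a single uniform exclusion argument; the paper's chain version has the mild advantage of foreshadowing the later upper-bound algorithm, which literally constructs such chains toward a witnessed error.
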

\begin{proof}
	We show that if a gadget is valid, then it is not possible to output error labels such that the above constraints are locally satisfied at all nodes. Hence, suppose we have a gadget that has a valid structure where nodes produce error labels. First of all, since the gadget is valid, then there is no node outputting $\lerror$, as it would violate the local constraints described above. Hence, nodes can only output error pointers in $\{\lright, \lleft, \parent, \rchild,\lup,\ldown_i\}$. There are two cases: 
	\begin{enumerate}
		\item all nodes of all sub-gadgets point towards the node labeled $\lcenter$ (i.e., the central node is a sink);
		\item the central node points towards the root of a sub-gadget.
	\end{enumerate}  
	In the first case, all error chains will end up at the central node that cannot output $\lerror$, which violates the error constraints. So, suppose that the node labeled $\lcenter$ produces an error pointer, that is, it outputs $\ldown_i$ for some $1\le i \le \Delta$. This means that we just need to show that we cannot cheat inside a sub-gadget. Hence, consider a sub-gadget, and suppose that the central node points towards the root of this sub-gadget. Notice that the root node cannot output $\lup$, since it would violate the error pointer constraints. Hence, all nodes $u$ of the sub-gadget must output an error pointer in $\{\lright, \lleft, \parent, \rchild\}$. The following hold.
	\begin{itemize}
		\item If the error pointer is $\lright$, then, according to the error label specifications, $u(\lright)$ can only output $\lright$ or $\lerror$. Since the structure is valid, no node will output $\lerror$, hence this chain will propagate until it reaches a node in the sub-gadget that does not have an incident edge labeled $\lright$, which contradicts constraint \ref{cons:error_right}. The case when the error pointer is $\lleft$ is analogous.
		\item If the error pointer is $\parent$, then, according to the error label specifications, the chain either reaches the root of the sub-gadget, or, at some point, a node in the chain outputs either $\lleft$ or $\lright$. The latter case is handled above, while in the former case, according to the error label constraints, the root should output $\lup$ (since it cannot output $\lerror$). But the root cannot point $\lup$ since, in that case, the root and the central node would point to each other, contradicting constraint \ref{cons:error_down(j)}.
		\item If the error pointer is $\rchild$, then, according to the specifications, $u(\rchild)$ can only output an error label in $\{\rchild, \lright, \lleft \}$ (since it cannot output $\lerror$). This chain cannot end at a node $v$ that has been reached traversing only edges labeled $\rchild$, as in that case $v$ should output $\lerror$, which is not allowed. So, at some point of the chain, there is a node that outputs $\lright$ or $\lleft$, and, as shown above, this would lead to a violation of the constraints. \qedhere
	\end{itemize}
\end{proof}

\subsection{Upper bound}
We show an algorithm $\gadcheck$ that satisfies Definition \ref{def:gadget}, i.e., given an upper bound $n$ on the size of the graph, in case of a valid gadget, it outputs $\lok$ at every node, while in case of an invalid gadget, $\gadcheck$ is able to provide error labels in $O(\log n)$ rounds satisfying the error constraints. Let $u(L_1^{i_1},L_2^{i_2},\dots,L_k^{i_k})$ be the node reached from $u$ by following label $L_1$ $i_1$ times, label $L_2$ $i_2$ times, and so on. The algorithm is as follows.

\begin{enumerate}
	\item Each node gathers its constant-radius neighborhood and checks if the constraints in Sections \ref{subsec:SG_checkability} and \ref{subsec:G_checkability} are satisfied. 
	\item If the specifications of a valid gadget are not satisfied in a node's constant-radius neighborhood, then the node outputs $\lerror$.
	\item If the constraints are satisfied in the constant-radius neighborhood of node $u$, then $u$ gathers its $O(\log n)$-radius neighborhood.
	\item If a node does not see any error in its $O(\log n)$-radius neighborhood, then it outputs $\lok$.
	\item If a node $u$ labeled $\lcenter$ sees an error in its $O(\log n)$-radius neighborhood, then $u$ outputs $\ldown_i$ if the error can be reached by following the labels $u(\ldown_i^1,\rchild^{i_1}, \lright^{i_2})$ or $u(\ldown_i^1,\rchild^{j_1}, \lleft^{j_2})$, where $i_1,i_2,j_1,j_2\ge 0$, breaking ties by choosing the smallest label $\ldown_i$ that satisfies the above.
	\item If a node $u$ not labeled $\lcenter$ sees an error in its $O(\log n)$-radius neighborhood, then it acts according to the following specifications, that a node checks in order. 
	\begin{enumerate}
		\item if there exists an error that can be reached following $u(\lright^i)$, where $i\ge 1$, $u$ outputs $\lright$;
		\item if there exists an error that can be reached following $u(\lleft^i)$, where $i\ge 1$, $u$ outputs $\lleft$;
		\item if there exists an error that can be reached following the labels $u(\parent^{i_1}, \lright^{i_2})$ or $u(\parent^{j_1}, \lleft^{j_2})$, where $i_1,j_1\ge 1$ and $i_2,j_2\ge 0$, then $u$ outputs $\parent$;
		\item if there exists an error that can be reached following the labels $u(\rchild^{i_1}, \lright^{i_2})$ or $u(\rchild^{j_1}, \lleft^{j_2})$, where $i_1,j_1\ge 1$ and $i_2,j_2\ge 0$, then $u$ outputs $\rchild$;
		\item if none of the above happens, it means that node $u$ is in a valid sub-gadget and the error is outside this sub-gadget; in this case, node $u$ outputs $\parent$ if it has an incident edge with that label, otherwise $u$ outputs $\lup$.
	\end{enumerate}
\end{enumerate}

\begin{lemma}
	The time complexity of algorithm $A$ is $O(\log n)$.
\end{lemma}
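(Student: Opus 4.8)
The plan is to read off the running time directly from the three phases of algorithm $\gadcheck$: a constant-radius local-checking phase, a single radius-$O(\log n)$ neighborhood-gathering phase, and a purely local post-processing phase on the data collected in the second phase. Phase~1 (Steps~1--2) only inspects a constant-radius neighborhood — every sub-gadget and gadget constraint of Sections~\ref{subsec:SG_checkability} and~\ref{subsec:G_checkability} refers to nodes at constant distance (e.g.\ the label-chain checks $u(\lchild,\lright,\parent)=u$ and $u(\lright,\lchild,\lleft,\parent)=u$ have length $3$ and $4$, and ``connected to exactly $\Delta$ nodes'' is radius $1$) — so it costs $O(1)$ rounds. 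Phase~2 (Step~3) gathers the radius-$O(\log n)$ neighborhood of $u$, which by definition costs $O(\log n)$ rounds in the \local{} model regardless of how much information that neighborhood contains.

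The crucial observation is that Phase~3 (Steps~4--6) needs no further communication. Each condition checked there is a statement about nodes, edges and (input/port) labels that already lie inside the radius-$O(\log n)$ ball collected in Step~3: ``$u$ sees no error in its $O(\log n)$-radius neighborhood'', ``there is an error reachable from $u$ along $\lright^{\,i}$'', ``\dots\ along $u(\parent^{i_1},\lright^{i_2})$'', ``\dots\ along $u(\ldown_i^{\,1},\rchild^{i_1},\lright^{i_2})$'', and so on. Since $u$ only ever follows a label-chain to an error lying \emph{within} the already-gathered ball, each such chain has length $O(\log n)$ and is traced entirely inside $u$'s local view; similarly, picking the smallest index $\ldown_i$ in Step~5 and running through the ordered list of cases in Step~6 are finite computations on that view, hence free under the unbounded-computation assumption of the \local{} model. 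Therefore the total cost is $O(1)+O(\log n)+0=O(\log n)$, which is the claim.

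The one point to pin down carefully is that the radius in Step~3 is genuinely $\Theta(\log n)$ — not larger — and that this radius is the ``right'' amount for the algorithm to behave as intended (the latter being what the surrounding correctness lemmas rely on). Here I would use that $n$ is, by hypothesis, an upper bound on the number $N$ of nodes, and that a valid gadget on $N$ nodes is $\Delta$ complete binary trees of a common height $h$ together with horizontal edges and one center node, so $N=\Theta(\Delta\cdot 2^{h})$ and hence $h=\Theta(\log N)=O(\log n)$; its diameter is then $O(h)=O(\log n)$, since every node reaches the center within $h+1$ hops along $\parent$/$\lup$ edges, so any two nodes are at distance at most $2(h+1)$. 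Thus a node whose whole (sub-)gadget looks valid has in fact seen all of it (plus its radius-$1$ surroundings) inside the gathered ball, which is what legitimizes both the ``no error seen $\Rightarrow \lok$'' branch and the ``error is outside my valid sub-gadget'' branch of Step~6. The main obstacle is therefore bookkeeping rather than mathematics: one must choose a single constant $c$ so that ``radius $c\log n$'' simultaneously (i)~covers a full gadget and one extra layer on every input graph of size at most $n$, and (ii)~contains every label-chain the algorithm might trace in Steps~5--6, so that no case distinction secretly forces a node to look beyond its ball. Once $c$ is fixed uniformly, the $O(\log n)$ bound is immediate.
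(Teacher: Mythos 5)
Your running-time decomposition is correct and, if anything, more explicit than the paper's own argument, which dispatches the time bound in a single sentence (``by definition, algorithm $A$ runs in $O(\log n)$ rounds'') and then briefly notes that a valid gadget has diameter $O(\log n)$, which is the same observation you make about sub-gadgets being complete binary trees of height $h=\Theta(\log N)$. So the claim as literally stated is proven.

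However, despite its terse statement, this lemma in the paper carries a much heavier burden: the bulk of its proof is a case-by-case \emph{correctness} argument, showing that on an \emph{invalid} gadget the concrete pointer assignment of Steps 5--6 produces error labels satisfying every constraint of the $\lcl$ $\Psi$ from Section~\ref{subsec:invalid_gadget} --- that each $\lright$, $\lleft$, $\parent$, $\rchild$, $\lup$, $\ldown_i$ chain is internally consistent and eventually terminates at a node outputting $\lerror$, and that the ordered case structure (including the ``I am in a valid sub-gadget but the error lies elsewhere'' branch, and the central node's choice of $\ldown_i$) cannot get stuck. Your proposal defers this to ``the surrounding correctness lemmas,'' but no such lemma exists: Lemmas~\ref{lem:SG_correctness} and~\ref{lem:G_correctness} characterize which \emph{input graphs} satisfy the structural constraints, and the unnumbered lemma at the end of Section~\ref{subsec:invalid_gadget} proves only the soundness direction (no error labeling of a valid gadget satisfies the constraints). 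The completeness direction --- that $\gadcheck$ always produces a valid error labeling on an invalid gadget, which is what Definition~\ref{def:gadget} demands and what Lemma~\ref{lem:newpi_ub} and Theorem~\ref{thm:gadget-family} ultimately rest on --- is proved only inside the body of this lemma. If the paper's proof were replaced by yours, that correctness would never be established. That is the genuine gap.
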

\begin{proof}
	By definition, algorithm $A$ runs in $O(\log n)$ rounds. Also, notice that a valid sub-gadget is a complete binary tree-like structure, and there is a node (the central one) that is connected to the root of each sub-gadget. This means that, in $O(\log n)$ rounds, a node either sees an error or it sees all the gadget. So, if a gadget looks like a valid structure from the perspective of a node, it means that it is globally a valid gadget, and in that case every node outputs $\lok$. We need to show that, in case of an invalid gadget, the algorithm produces valid error labels, that is, the constraints described in Section \ref{subsec:invalid_gadget} are satisfied at each node that outputs $\lerror$ or an error pointer.
	
	Consider a gadget that has an invalid structure. According to the specifications of algorithm $A$, a node outputs $\lerror$ if and only if the constraints that determine a valid gadget are not satisfied in its constant-radius neighborhood, as desired. So, consider a node $u$ that outputs an error pointer, and let us denote with $w$ a node that outputs $\lerror$ that is at distance $O(\log n)$ from $u$. 
	\begin{enumerate}
		\item If there is a path that connects node $u$ to a node $w$ by using only labels $\lright$, then $u$ will output $\lright$. This holds for every node between $u$ and $w$, resulting in an error chain that traverses only edges labeled $\lright$, and ends at a node that witnesses an error. This error chain behaves according to constraint \ref{cons:error_right} in Section \ref{subsec:invalid_gadget}.
		\item If the above does not apply, then $u$ checks if there is a path connecting $u$ and a node $w$ using only labels $\lleft$, and if that is the case, $u$ outputs $\lleft$. Similarly as before, this chain satisfies constraint \ref{cons:error_left}  in Section \ref{subsec:invalid_gadget}.
		\item If the above cases do not hold, then node $u$ checks if there is a path that connects $u$ and a node $w$ using $i\ge 1$ times the label $\parent$, followed by $j\ge 0$ times $\lright$ or $k\ge 0$ times $\lleft$. If that is the case, $u$ outputs $\parent$. This error chain, either ends at an ancestor of $u$ that outputs $\lerror$, or, at some point in the error chain there will be an ancestor $v$ of $u$ that reaches a node $w$ following only $\lright$, or only $\lleft$ labels, ending the error chain at a node outputting $\lerror$. This error chain behaves according to constraint \ref{cons:error_parent_pointer} in Section \ref{subsec:invalid_gadget}.
		\item If the above cases do not apply, then node $u$ checks whether there is a path connecting $u$ and $w$ using $i\ge 1$ times the label $\rchild$, followed by $j\ge 0$ times $\lright$ or $k\ge 0$ times $\lleft$, and if that is the case, $u$ outputs $\rchild$. Again, this chain ends either at a descendant that outputs $\lerror$, or, at some point there will be a node $v$ that can reach $w$ following only $\lright$, or only $\lleft$ labels, behaving according to constraint \ref{cons:error_Rchild} in Section \ref{subsec:invalid_gadget}.\label{case_Rchild}
		\item If a node $u$ not having incident edges labeled $\ldown_i$ (i.e., $u$ is not a central node) passes all the above checks, it means that it cannot reach a node $w$ that outputs $\lerror$ without traversing an edge labeled $\lup$. Hence $u$ is a node of a valid sub-gadget, and the error is somewhere else. In this case, all nodes in the valid sub-gadget point towards the central node, that is, a node outputs $\parent$ if it has an incident edge with such a label, otherwise it outputs $\lup$, satisfying constraints \ref{cons:error_parent_pointer} and \ref{cons:error_down(j)} in Section \ref{subsec:invalid_gadget}.
		\item The central node $u$ can only output an error pointer $\ldown_i$, breaking ties by choosing the label $\ldown_i$ having smallest index, such that an error can be reached by following $u(\ldown_i^1, \rchild^{i_1},\lright^{i_2})$ or $u(\ldown_i^1, \rchild^{j_1},\lleft^{j_2})$ ($i_1,i_2,j_1,j_2\ge 0$). Notice that the central node cannot point to the root of a valid sub-gadget, since no error can be reached from it following labels $\rchild$, or $\lright$, or $\lleft$. It is easy to see that if $u(\ldown_i)$ outputs $\lerror$, then the constraints are satisfied. Otherwise, for the same reasoning used in point \ref{case_Rchild}, we conclude that the error chain behaves according to constraint \ref{cons:error_parent_pointer} in Section \ref{subsec:invalid_gadget}.
	\end{enumerate}
Notice that at least one of the above cases apply when a node sees an error. Hence, a node outputs an error or an error pointer if and only if the structure is invalid.
	\end{proof}

\subsection{Node-edge checkability of the gadget}
We now show how to modify problem $\Psi$ defined in Section \ref{subsec:invalid_gadget} and obtain a \nelcl{} $\Psi_\gadget$. For this purpose, we must show that the validity of the output of nodes is checkable according to some node and edge constraints. 

First of all, it is easy to see that all error pointers can be expressed as node and edge constraints. For example, consider the constraint ``If the error pointer is $\lright$, then $u(\lright)$ outputs $\lerror$ or an error pointer $\lright$''. We can encode this constraint by requiring the following.
\begin{itemize}
	\item Node constraints: a node outputs consistently on all its incident edges, i.e., if a node outputs $\lright$ on one incident edge, it must output $\lright$ on all other incident edges.
	\item Edge constraints: for an edge $e=\{u,v\}$ input labeled $L_u(e)=\lright$ and $L_v(e)=\lleft$, if $u$'s side output label is $\lright$, $v$'s side output label is either $\lright$ or $\lerror$.
\end{itemize}
All other error pointer constraints can be encoded in a similar way.

Handling the label $\lerror$ requires more care. In fact, as it is defined, $\Psi$ allows nodes to output $\lerror$ if they locally see an inconsistency in the gadget structure. While this output is checkable by exploring a constant radius neighborhood, it may not necessarily be node-edge checkable. The cases that require more attention are constraints \ref{cons:no_selfloop}, \ref{cons:triangle}, and \ref{cons:cycles} of Section \ref{subsec:SG_checkability}, since all the others can be handled similarly as we did with error pointers.

\paragraph{Handling constraint \ref{cons:no_selfloop}.}
According to constraint \ref{cons:no_selfloop}, we need to allow nodes to output $\lerror$ if there are self loops or parallel edges. It seems not possible to prove the presence of a self loop or parallel edges in the node-edge formalism. Thus, instead of requiring the nodes to prove the presence of a self loop or a parallel edge, we require the input labeling to prove the \emph{absence} of self loops and parallel edges. For this purpose, as input label for the nodes of the gadget, we add a distance-$2$ coloring with $O(\Delta^2)$ colors. It is trivial to see that self loops do not admit a proper coloring of the graph. Also, in case of parallel edges, there are two edges connected to the same neighbor, violating the constraint of being a proper distance-$2$ coloring. In order to make everything node-edge checkable, we require that the color of each node is replicated on all its incident edges as well (recall that each edge may be input labeled differently on each side).

We now show how to handle the case in which there is a node $v$ such that it has two incident edges ending on nodes with the same color $c$ (this includes the parallel edges case). In this case, node $v$ is allowed to output $\lerror$ by specifying a color $c$ and two incident edges that connect $v$ to neighbors of color $c$. The constraints are as follows (see Figure \ref{fig:parallel-edges} for an example):
\begin{itemize}[noitemsep]
	\item Node constraints: two edges must be specified, outputting on each of them the same color $c$.
	\item Edge constraints: if an edge is output labeled $c$ on one side, it is input labeled with the same color on the other side.
\end{itemize} 
Other distance-$2$ color errors can be handled similarly.

\begin{figure}[t]
	\centering
	\includegraphics[scale=1.2]{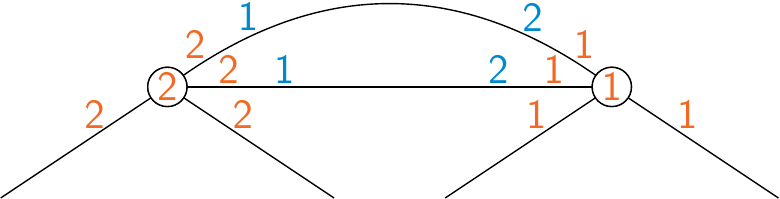}
	\caption{An example of node-edge checkability in the case of parallel edges: the input label is shown in orange, while the output label is shown in blue.}\label{fig:parallel-edges}
\end{figure}

Notice that, having a distance-$2$ coloring in input does not change the complexity of $\gadcheck$, that is, it still remains not possible to claim that there is an error in a valid instance.

\paragraph{Handling constraint \ref{cons:cycles}.}
We now show how to handle constraint \ref{cons:cycles}, as constraint \ref{cons:triangle} can be handled similarly. Constraint \ref{cons:cycles} allows a node $u$ to output $\lerror$ if $u(\lright,\lchild,\lleft,\parent) \neq u$ (if the path exists). We show how $u$ and its neighbors can prove the error claim in a node-edge checkable manner. The idea is that nodes $u$, $u(\lright)$, $u(\lright,\lchild)$, $u(\lright,\lchild,\lleft)$, and $u(\lright,\lchild,\lleft,\parent)$, can label themselves with a chain of labels such as $A,B,C,D,E$. In a valid graph, this is not possible because $u$ would have both labels $A$ and $E$. These labels can be node-edge checked. For example:
\begin{itemize}[noitemsep]
	\item Node constraints: if a node is labeled $A$, it must have label $A$ on all its incident edges
	\item Edge constraints: if an edge is labeled $A$ and $\lright$ on one side, it needs to be labeled $B$ on the other side.
\end{itemize}
The node-edge constraints are similar for nodes that output labels different from $A$ in the chain. See Figure \ref{fig:cycle} for an example.
\begin{figure}[t]
	\centering
	\includegraphics[scale=0.8]{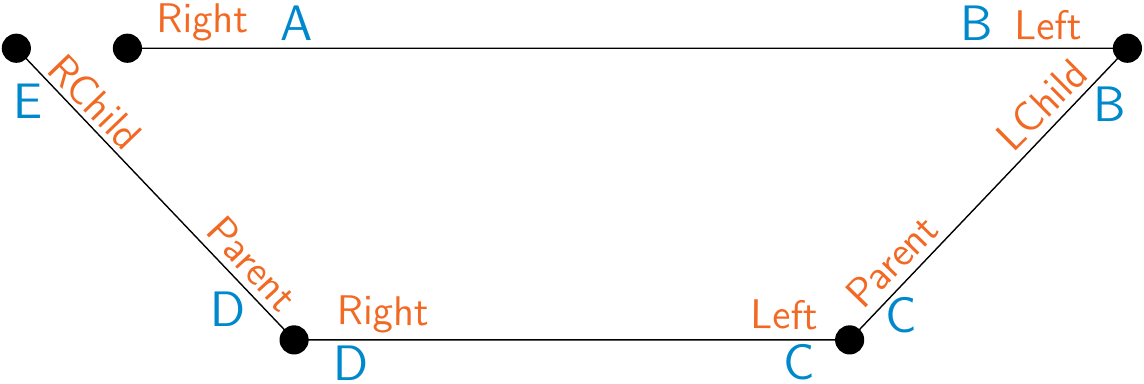}
	\caption{An example of node-edge checkability in the case where constraint \ref{cons:cycles} is not satisfied: the input label is shown in orange, while the output label is shown in blue.}\label{fig:cycle}
\end{figure}

Finally, we would like that \emph{all} nodes that do not satisfy locally constraint \ref{cons:cycles} are able to produce such a proof. The problem is that we do not currently allow overlapping chains, as it would allow to produce errors in valid graphs. This can be solved by requiring nodes to properly color the chains. That is, each chain has a color from a large enough palette, and nodes can participate to different chains by tagging each label with the color of the chain. For example, a node could output $\{(c_1,A),(c_2,E)\}$. Since we are in bounded degree graphs, and since chains have constant length, this requires an additive term of $O(\log^* n)$ rounds on the running time of algorithm $\gadcheck$, thus the complexity of $\gadcheck$ does not change. 

\subsection{Validity of the gadget family}
In this section we presented a gadget family $\gadget$ and proved that it satisfies some properties. Now we show that it is a $(\log, \Delta)$-gadget family, i.e., it satisfies the properties in Definition \ref{def:gadget}, proving Theorem \ref{thm:gadget-family}. Consider a graph $G=(V,E)\in \gadget$, where $|V|=n$. Then $G$ satisfies the following. 
\begin{itemize}
	\item The number of nodes is trivially $n$.
	\item Each of the $\Delta$ sub-gadgets has a port node labeled $\port_i$, where $1\le i \le \Delta$.
	\item Each sub-gadget has a complete binary tree-like structure, hence its diameter is $O(\log n)$. Since the root of each sub-gadget is connected to the central node, the diameter of the gadget and the pairwise distance between the ports is $O(\log n)$.
\end{itemize}
The above observations show that, according to Definition \ref{def:gadget}, any $n$-node gadget $G\in \gadget$ is an $(n,O(\log n))_\Delta$-gadget. Moreover, we showed that checking whether a graph $G$ is contained in $\gadget$ is a \nelcl{} solvable in $O(\log n)$ communication rounds, given an upper bound $n$ on the size of the network. In order to show that $\gadget$ is really a $(\log, \Delta)$-gadget family, we still need to show that, for any $n\in \NN$, there exists a $G \in \gadget$ with $\Theta(n)$ nodes such that the pairwise distances between the ports are all in $\Theta(\log n)$. This is indeed satisfied by those gadgets $G\in \gadget$ having all $\Delta$ sub-gadgets of the same size. 

\section{Putting things together}\label{sec:newlcl}

In this section, we combine our findings of Sections~\ref{sec:padded-lcl} and \ref{sec:thegadget} in order to provide a family of \lcl{} problems where randomization helps, but only subexponentially.
We obtain this family by starting from the sinkless orientation problem and recursively applying Theorem~\ref{thm:pi_to_newpi}.

More precisely, we define a family of \lcl{}s $\Pi^i$ having deterministic complexity $\Theta(\log^i n)$ and randomized complexity $\Theta(\log^{i-1} n\log\log n)$, for any constant $i = 1, 2, \dotsc$. 
The base case $i=1$ is given by the sinkless orientation problem, for which deterministic and randomized tight bounds of $\Theta(\log n)$ and $\Theta(\log\log n)$, respectively, are known \cite{ghaffari17distributed,chang16exponential,Brandt2016}.
Note that these bounds also hold in our setting, where we allow self-loops, parallel edges, and disconnected graphs.
The problem $\Pi^{i+1}$ is obtained by applying Theorem~\ref{thm:pi_to_newpi} to $\Pi^i$ and the $(\log, \Delta)$-gadget family whose existence we proved in Theorem~\ref{thm:gadget-family}, where we set $f(x) := \lfloor \sqrt{x} \rfloor$.

From Theorem \ref{thm:pi_to_newpi}, we know that, given a problem $\Pi$ of time complexity $T(\Pi,n)$ and using a $(\log,\Delta)$-gadget family and the specified function $f(x)$, we obtain a new \lcl{} $\Pi'$ of complexity $O\bigl(T(\Pi,n) \cdot \log n\bigr)$ and $\Omega\bigl(T(\Pi,\sqrt{n}) \cdot \log(\sqrt{n})\bigr)$, for both the deterministic and the randomized case.
Starting from problem $\Pi^i$ having deterministic and randomized complexities $\Theta(\log^i n)$ and $\Theta(\log^{i-1} n\log\log n)$, we obtain that the problem $\Pi^{i+1}$ has:
\begin{itemize}
	\item Deterministic complexity $O\bigl(\log^i n \log n\bigr)$ and $\Omega\bigl(\log^i(\sqrt{n}) \log (\sqrt{n})\bigr)$, obtaining a tight complexity of $\Theta(\log^{i+1}n)$.
	\item Randomized complexity $O\bigl(\log^{i-1} n\log\log n \log n\bigr)$ and $\Omega\bigl(\log^{i-1}(\sqrt{n})\log\log(\sqrt{n}) \log (\sqrt{n})\bigr)$, obtaining a tight complexity of $\Theta(\log^i n \log\log n)$.
\end{itemize}
From the above observations, we obtain the following theorem. 
\begin{theorem}
	There exist \lcl{} problems with deterministic complexity $\Theta(\log^i n)$ and randomized complexity $\Theta(\log^{i-1} n \log\log n)$, for any $i = 1, 2, \dotsc$.
\end{theorem}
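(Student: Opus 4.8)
The plan is to prove the statement by induction on $i$, using the black-box construction of \theoremref{thm:pi_to_newpi} together with the $(\log,\Delta)$-gadget family furnished by \theoremref{thm:gadget-family}. Set $\Pi^1$ to be the sinkless orientation problem, which (as exhibited in the preliminaries) is an \nelcl{} with deterministic complexity $\Theta(\log n)$ and randomized complexity $\Theta(\log\log n)$; I would first record that these bounds from \cite{Brandt2016,chang16exponential,ghaffari17distributed} persist in our enlarged graph class allowing self-loops, parallel edges, and disconnected graphs, which is either immediate from the original arguments or a one-line remark. This handles the base case $i = 1$.

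For the inductive step, assume $\Pi^i$ is an \nelcl{} with $\tdet(\Pi^i,n) = \Theta(\log^i n)$ and $\trand(\Pi^i,n) = \Theta(\log^{i-1} n \log\log n)$. I would apply \theoremref{thm:pi_to_newpi} with $\Pi := \Pi^i$, with $\gadget$ the $(\log,\Delta)$-gadget family, and with $f(x) := \lfloor\sqrt{x}\rfloor$, after checking the hypotheses on $f$: indeed $f(x)\le x$ for all $x\in\NN$, and $f$ is surjective onto $\NN$ since $f(x^2) = x$. The theorem then produces an \nelcl{} $\Pi^{i+1} := (\Pi^i)'$ whose deterministic complexity lies between $O\bigl(\tdet(\Pi^i,n)\cdot\log n\bigr)$ and $\Omega\bigl(\tdet(\Pi^i,\lfloor\sqrt n\rfloor)\cdot\log(n/\lfloor\sqrt n\rfloor)\bigr)$, and analogously for the randomized complexity with $\trand$ in place of $\tdet$.

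What remains is to verify that the two bounds coincide asymptotically. Plugging in the inductive hypothesis and using $\log\sqrt n = \tfrac12\log n$ and $n/\lfloor\sqrt n\rfloor = \Theta(\sqrt n)$, the deterministic upper bound is $O(\log^{i+1} n)$ and the lower bound is $\Omega\bigl((\log\sqrt n)^{i}\cdot\log\sqrt n\bigr) = \Omega(\log^{i+1} n)$, so $\tdet(\Pi^{i+1},n) = \Theta(\log^{i+1} n)$. Similarly, noting also $\log\log\sqrt n = \log\log n - \log 2 = \Theta(\log\log n)$, the randomized bounds are $O(\log^i n\log\log n)$ and $\Omega\bigl((\log\sqrt n)^{i-1}\log\log\sqrt n\cdot\log\sqrt n\bigr) = \Omega(\log^i n\log\log n)$, hence $\trand(\Pi^{i+1},n) = \Theta(\log^i n\log\log n)$. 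Since every \nelcl{} is in particular an \lcl{}, the family $\{\Pi^i\}_{i\ge 1}$ witnesses the theorem.

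I expect the only delicate point to be this last bookkeeping: one must be sure that the constant factor $\tfrac12$ shed by each application of $\sqrt{\cdot}$, even iterated, changes only constants and not the exponent of $\log n$, and that the trailing $\log\log$ factor is preserved under $n\mapsto\sqrt n$; both hold precisely because $n\mapsto\sqrt n$ acts as a constant-factor perturbation at the level of $\log n$. A secondary, more conceptual caveat worth stating explicitly is that \theoremref{thm:pi_to_newpi} is phrased for \nelcl{}s, so the induction must remain within the \nelcl{} class throughout — which it does, since the output $\Pi'$ of the theorem is again an \nelcl{}.
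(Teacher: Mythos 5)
Your proposal is correct and follows essentially the same route as the paper: the same base case (sinkless orientation, with the remark that its bounds persist in the enlarged graph class), the same inductive application of Theorem~\ref{thm:pi_to_newpi} with the $(\log,\Delta)$-gadget family and $f(x) = \lfloor\sqrt{x}\rfloor$, and the same asymptotic bookkeeping to match upper and lower bounds at each step. The only additions you make — explicitly checking the hypotheses on $f$ and noting the induction stays within the \nelcl{} class — are sensible but do not constitute a different approach.
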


\section*{Acknowledgments}

We thank Fabian Kuhn for discussions related to network decompositions.
This work was supported in part by the Academy of Finland, Grant 285721.

\urlstyle{same}
\bibliographystyle{plainnat}
\bibliography{padding}

\end{document}